\newcommand{\bx}{{\mathbf{x}}}
\newcommand{\bJ}{{\mathbf{J}}}
\newcommand{\pose}{\tau}
\newcommand{\dil}{~\mathsf{dil}}
\newcommand{\ero}{\mathsf{ero}}
\newcommand{\sweep}{\mathsf{sweep}}
\newcommand{\unsweep}{\mathsf{unsweep}}
\newcommand{\obs}{\mathsf{obs}}
\newcommand{\powerset}{\mathcal{P}}
\newcommand{\cupr}{\mathrel{\cup}}
\newcommand{\capr}{\mathrel{\cap}}
\newcommand{\R}{{\mathds{R}}}
\newcommand{\SE}[1]{{\mathrm{SE}(#1)}}
\newcommand{\raw}{\mathsf{raw}}
\newcommand{\asM}{\mathsf{M}}
\newcommand{\asD}{\mathsf{D}}
\newcommand{\euc}{{\mathsf{E}}}
\newcommand{\conf}{{\mathsf{C}}}
\newcommand{\rset}{\mathbb{S}}
\newcommand{\rmotion}{\mathbb{C}}
\newcommand{\motions}{{\mathsf{T}}}
\newcommand{\mmfs}{{\mathsf{B}}}
\newcommand{\motion}{T}
\newcommand{\mmf}{B}
\newcommand{\rest}{C}
\newcommand{\inst}{D}
\newcommand{\action}{\varphi}
\newcommand{\process}{\Phi}
\newcommand{\expr}{\mathsf{E}}
\newcommand{\prim}{P}
\newcommand{\Prim}{\mathbb{P}}
\newcommand{\atom}{A}
\newcommand{\Atom}{\mathbb{A}}
\newcommand{\cons}{\mathrm{cons}}
\renewcommand{\th}{$^\text{th}$ }
\theoremstyle{definition}
\newtheorem{defn}{Definition}
\newtheorem{theo}{Theorem}
\newcommand{\eq}[1]{(\ref{#1})} 
\newcommand{\com}[1]{} 
\journal{\rm Computer-Aided Design (CAD), Special Issue on SPM'2018. DOI: \href{https://doi.org/10.1016/j.cad.2018.04.022}{10.1016/j.cad.2018.04.022}}
\begin{document}

\begin{frontmatter}

\title{Automated Process Planning for Hybrid Manufacturing}

\author{Morad Behandish, Saigopal Nelaturi, and Johan de Kleer}

\address{\rm
	Palo Alto Research Center (PARC),
	3333 Coyote Hill Road, Palo Alto, California 94304
	\vspace{-15.0pt}
}

\begin{abstract}

Hybrid manufacturing (HM) technologies combine additive and subtractive manufacturing (AM/SM) capabilities, leveraging AM's strengths in fabricating complex geometries and SM's precision and quality to produce finished parts. We present a systematic approach to automated computer-aided process planning (CAPP) for HM that can identify non-trivial, qualitatively distinct, and cost-optimal combinations of AM/SM modalities. A multimodal HM process plan is represented by a finite Boolean expression of AM and SM manufacturing primitives, such that the expression evaluates to an `as-manufactured' artifact.  We show that primitives that respect spatial constraints such as accessibility and collision avoidance may be constructed by solving inverse configuration space problems on the `as-designed' artifact and manufacturing instruments. The primitives generate a finite Boolean algebra (FBA) that enumerates the entire search space for planning. The FBA's canonical intersection terms (i.e., `atoms') provide the complete domain decomposition to reframe manufacturability analysis and process planning into purely symbolic reasoning, once a subcollection of atoms is found to be interchangeable with the design target. The approach subsumes unimodal (all-AM or all-SM) process planning as special cases. We demonstrate the practical potency of our framework and its computational efficiency when applied to process planning of complex 3D parts with dramatically different AM and SM instruments.

\end{abstract}

\begin{keyword}
	Hybrid Manufacturing \sep
	Process Planning \sep
	Spatial Reasoning \sep
	Additive Manufacturing \sep
	Machining
\end{keyword}

\end{frontmatter}

\date{May 18, 2018}


\section{Introduction} \label{sec_intro}

\noindent
Hybrid manufacturing (HM), combining the capabilities of additive and subtractive manufacturing, is the new frontier of part fabrication. While additive manufacturing (AM) continues to enable unprecedented levels of structural complexity and customization, subtractive manufacturing (SM) remains indispensable for producing high-precision, mission-critical, and reliable mechanical components with functional interfaces. Versatile `multi-tasking' machines with simultaneous high-axis computer numerical control (CNC) of multiple AM and SM instruments (e.g., deposition heads and cutting tools) keep emerging on the market, enabling efficient use-cases for fabrication and repair (reviewed in Section \ref{sec_lit}). It is only a matter of time before such processes dominate the shop floors as the unique and complementary benefits of AM and SM become vital to defense, aerospace, and consumer products.
 
Today, HM process planning rarely extends beyond the common ``AM-then-SM'' patterns. Use-case scenarios include support structure removal by CNC tooling after metal AM of near-net shapes and surface patching of corroded surfaces for repairing worn-out parts \cite{Liou2007applications,Ren2007part}. In some scenarios SM post-processing is inevitable due to the limitations of AM in producing overhang shapes or high-precision functional surfaces for assembly. In other scenarios it is a matter of saving production costs by optimizing material utilization---when AM/SM alone would require substantial material deposition/removal by starting from an empty platform or a large raw stock, respectively---or prolonging product lifecycles by using multiple alloys in a single part (e.g., Fig. \ref{fig_real} (g)). Such cases are already in use for producing corrosion-resistant parts for injection molding and oil transportation industries \cite{Yamazaki2016development}.

Although the capability to simultaneously use AM and SM exists in modern fabrication, there are very few examples of designs that are \emph{enabled} exclusively by HM. In most showcased success stories, the separation of features is trivial and the AM/SM actions come in predictable pairs that facilitate manual or semi-automatic process planning (e.g., Fig. \ref{fig_real} (a--f)). Even when designs enabled by HM can be conceptualized, planning their fabrication remains a manual activity driven by emerging expertise in HM.

This article presents theoretical foundations and computational algorithms to enable automatic construction of valid and cost-effective HM process plans for an arbitrary collection of AM/SM capabilities, provided by the same or different machine(s), with shapes and motions of arbitrary geometric complexity.

\begin{figure*}
	\centering \includegraphics[width=0.95\textwidth]{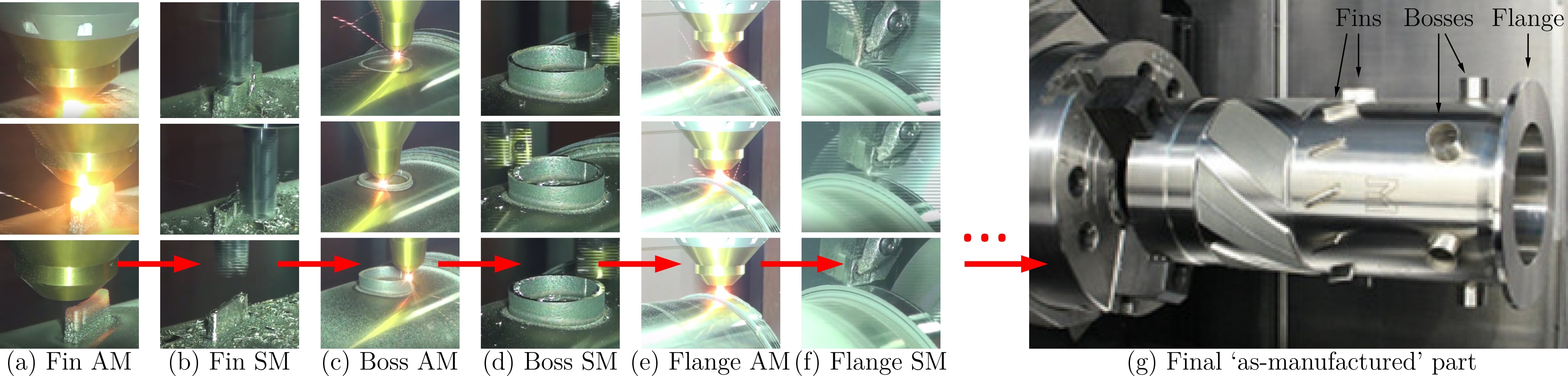}
	\caption{A metal part manufactured by a combination of 5-axis printing, milling, and turning operations on \textsf{Mazak INTEGREX i-400 AM} \cite{Yamazaki2016development}. The operations are typically planned in AM-then-SM pairs to grow features and finish them one-at-a-time. Source: \href{https://youtu.be/KbXJb4wcxnw}{youtu.be/KbXJb4wcxnw}.} \label{fig_real}
\end{figure*}

\subsection{Related Work} \label{sec_lit}

Recently a number of  manufacturing studies have reported on ``hybridizing'' select AM and SM capabilities \cite{Jeng2001mold,Liou2001research}. Among the successful concepts are hybrid layered manufacturing (HLM) \cite{Akula2006hybrid,Karunakaran2010low} and surface patching \cite{Liou2007applications,Ren2007part} that combine selective laser cladding (SLC) and CNC machining for rapid prototyping (RP), repair and modification of die/mold parts, and re-tipping of high-value aerospace turbine blades \cite{Jones2012remanufacture}. Other combinations include SLC and CNC mill-turning \cite{Yamazaki2016development} as well as direct metal laser sintering (DMLS) and precision milling \cite{Du2016novel}. For reviews of HM technologies available today, see \cite{Zhu2013review,Lorenz2015review,Flynn2016hybrid,Merklein2016hybrid}.

As HM hardware technologies are striding ahead, computer aided process planning (CAPP) software tools to support their incredible potential are falling behind. Among the few reported efforts, Manogharan et al. \cite{Manogharan2015aims} introduced a HM system whose software component collected a suite of existing tools used in pure AM/SM process planning such as visibility analysis, fixture design, deviation/over-growth quantification, and tool-path planning, without addressing spatial complications that are unique to commingled AM+SM. \textsf{Siemens PLM Software} is now offering HM computer-aided manufacturing (CAM) tools as part of its \textsf{NX} solutions \cite{Siemens2014hybrid}, also using feature-based decomposition into pure AM/SM segments, each to be independently path planned. To the best of our knowledge, none of the existing software tools are able to systematically explore alternative HM process plans where the same 3D regions of a part---not necessarily separable as a standalone feature ---can be made with both AM/SM, and make cost-driven decisions.

Automatic feature recognition comprises a large body of literature for traditional SM (reviewed in \cite{Shah1991survey,Subrahmanyam1995overview,Gupta1997automated,Han2000manufacturing}). Notable techniques include volumetric decomposition \cite{Woo1982feature,Kim1992recognition}, graph-based B-rep analysis \cite{Joshi1988graph,Wu1996analysis}, and rule-based pattern recognition \cite{Vandenbrande1993spatial,Babic2008review} among others. Despite being effective when features are clearly separable, these methods do not extend to complex shapes with unclassifiable or interacting/intersecting features \cite{Tseng1994recognizing}. The notion of a ``feature''---one that depends on engineering intent \cite{Shah1991survey} with no consistent definition across design and manufacturing---is even more ambiguous in AM, leading to knowledge-based ontologies with their own limitations of applicability.

Recently, our group presented an alternative, {\it feature-free} method for CNC milling based on maximal machinable volumes in any accessible orientation \cite{Nelaturi2015automatic}, enabling a rapid process planning paradigm that scales to part/tool shapes and motions of arbitrary complexity (Fig. \ref{fig_smcost}). The underlying mathematical foundations were later shown to be applicable to AM analysis and design correction/feedback \cite{Nelaturi2015manufacturability} as well. There are a number of fundamental challenges in extending these ideas to HM processes with interleaved AM/SM actions that we discuss in Section \ref{sec_approach}.

\begin{figure}
	\centering
	\includegraphics[width=0.45\textwidth]{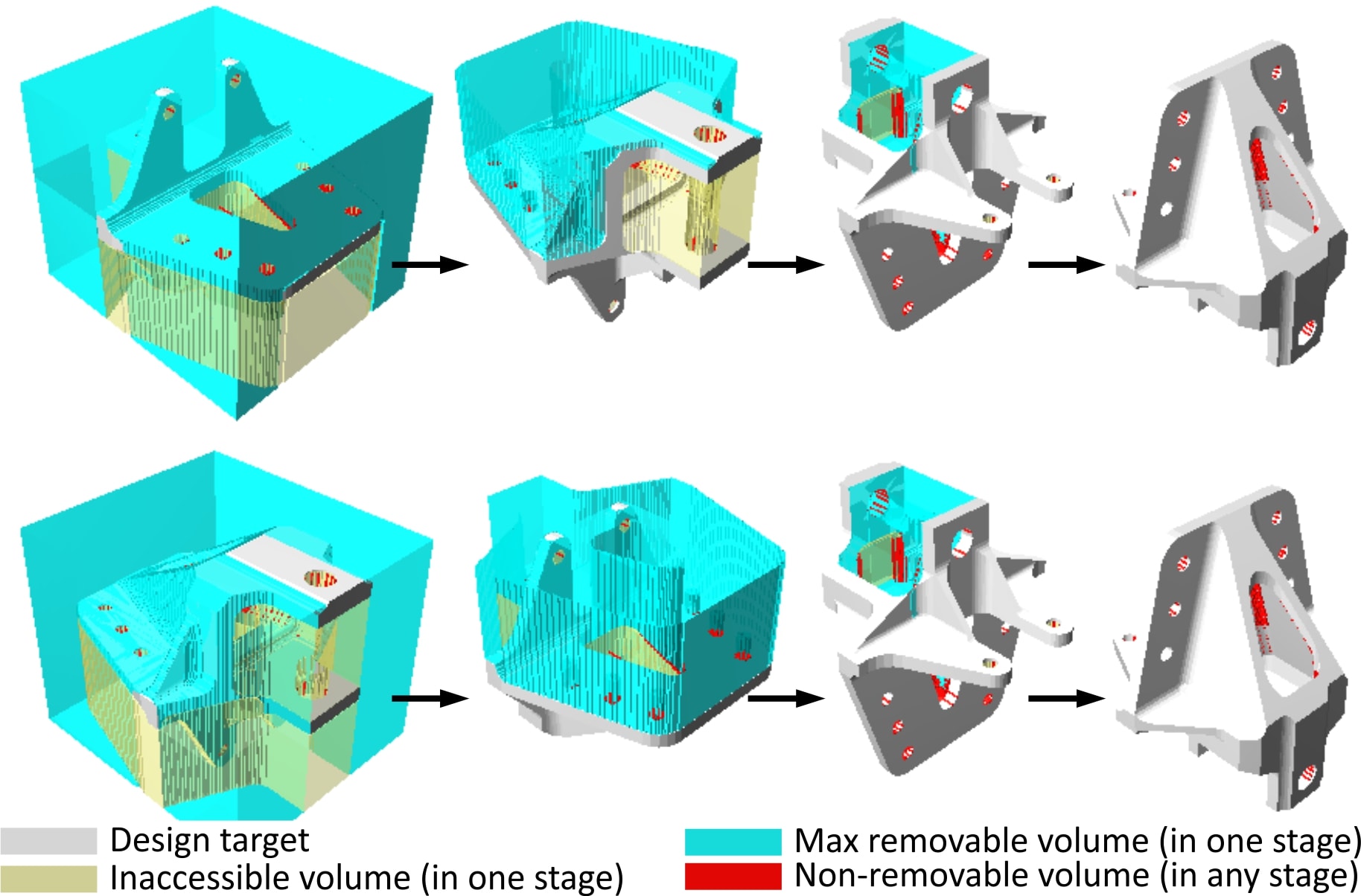}
	\caption{Qualitatively distinct SM plans with different costs. Notice the sequence of orientations in which volumes are removed differs between the plans, despite converging to the same final shape.} \label{fig_smcost}
\end{figure}

The current approach subsumes our earlier work in machining process planning \cite{Nelaturi2015automatic} as a special case. A major breakthrough was brought about by the ability to enumerate the entire search space using a logical (rather than geometric) representation in terms of a finite Boolean algebra (FBA), which enables {\it formulating and solving the planning problem in purely symbolic terms.} Rather than storing geometric representations of the evolving workpiece and the removal volumes associated with each manufacturing action (Fig. \ref{fig_smcost}), we show that such information can be encoded as binary strings in terms of the atomic units of manufacturing.

\subsection{Contributions \& Outline}

This article presents a computational framework to evaluate manufacturability and find process plans for HM. It accommodates a large class of existing (and potentially future) AM/SM capabilities by an abstraction that separates geometric and spatial reasoning for accessibility analysis and collision avoidance from logical and symbolic reasoning used to search for optimal plans. We show that:
\begin{enumerate}
	\item HM processes can be geometrically described by sequences of idempotent AM/SM actions (Section \ref{sec_proc}).

	\item The AM/SM primitives characterizing the actions can be constructed independently as the {\it closest} shapes to the design target achievable by means of a single AM/SM capability in a particular setup (Section \ref{sec_cap}).

	\item A set-theoretic notion of ``closeness'' is formulated with respect to minimal/maximal deposition/removal volumes, and computed using group morphological operations \cite{Lysenko2010group} in the configuration space of relative part/tool motions (Section \ref{sec_prim}).

	\item Manufacturability tests can be performed prior to the costlier process planning, using canonical representations of the design target in the FBA of the aforementioned primitives (Section \ref{sec_decomp}).

	\item The search space for process planning is describable in purely symbolic terms with respect to FBA states and transitions, and is explored using standard AI search algorithms \cite{Korf2010algorithms} (Section \ref{sec_hmplans}).
\end{enumerate}

\section{An  Overview Of the Approach} \label{sec_approach}

\begin{figure*}
	\centering \includegraphics[width=0.95\textwidth]{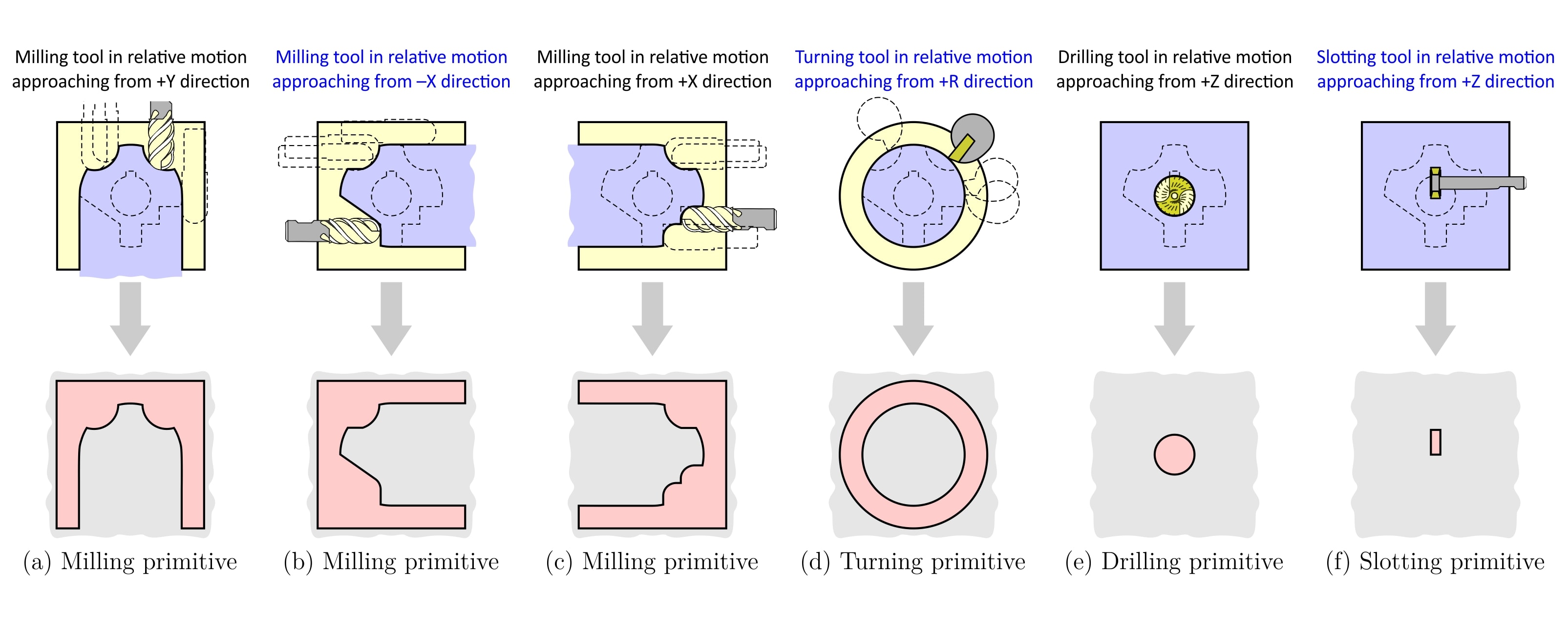}
	\centering \includegraphics[width=0.95\textwidth]{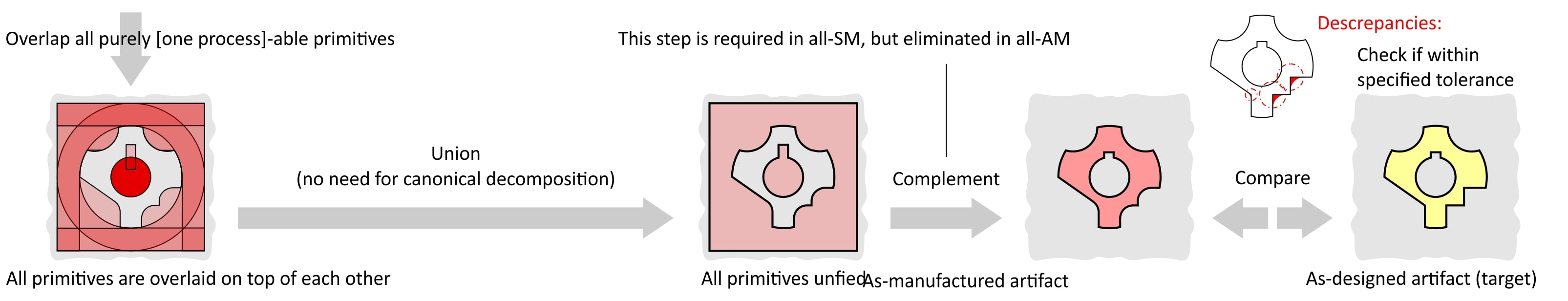}
	\caption{Our workflow for manufacturability analysis for unimodal processes (all-AM or all-SM, the latter in this example). Compare the workflow with that of multimodal manufacturability analysis in Fig. \ref{fig_mult} (bottom) .} \label{fig_uni}
\end{figure*}

\begin{figure*}
	\centering \includegraphics[width=0.95\textwidth]{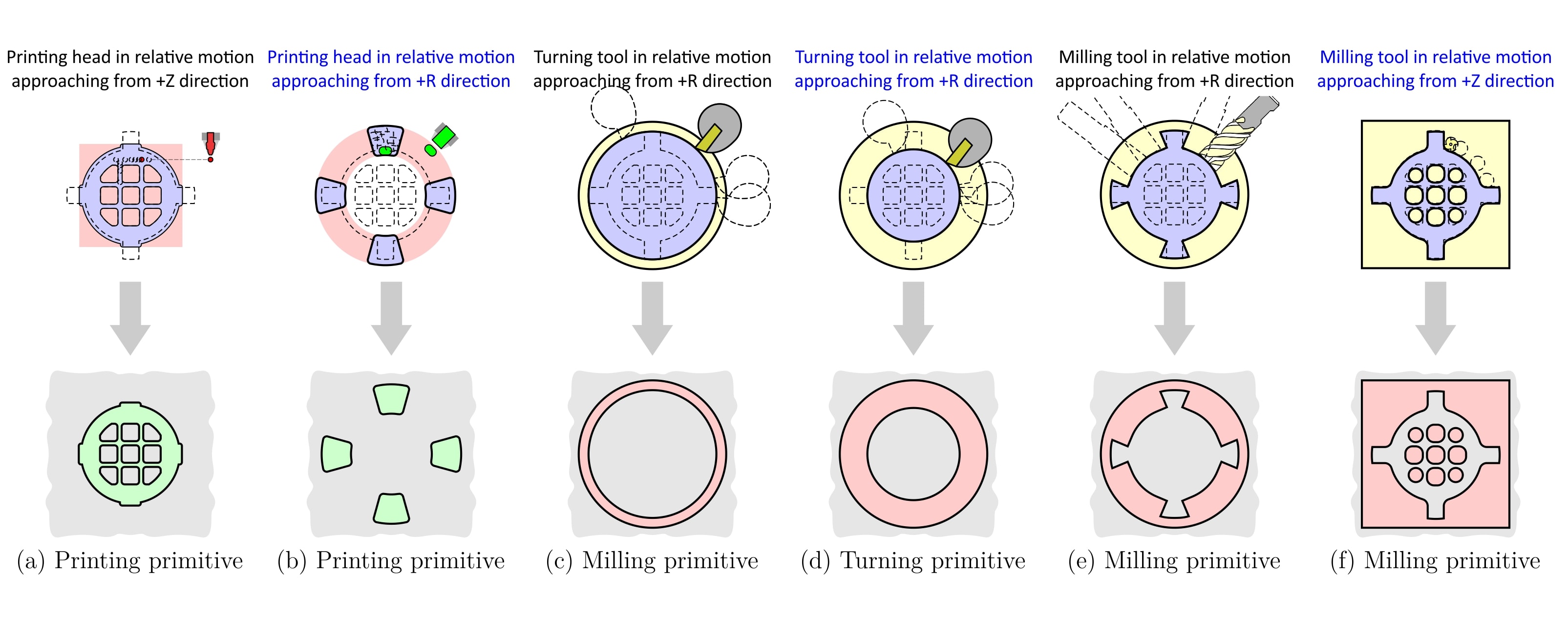}
	\centering \includegraphics[width=0.95\textwidth]{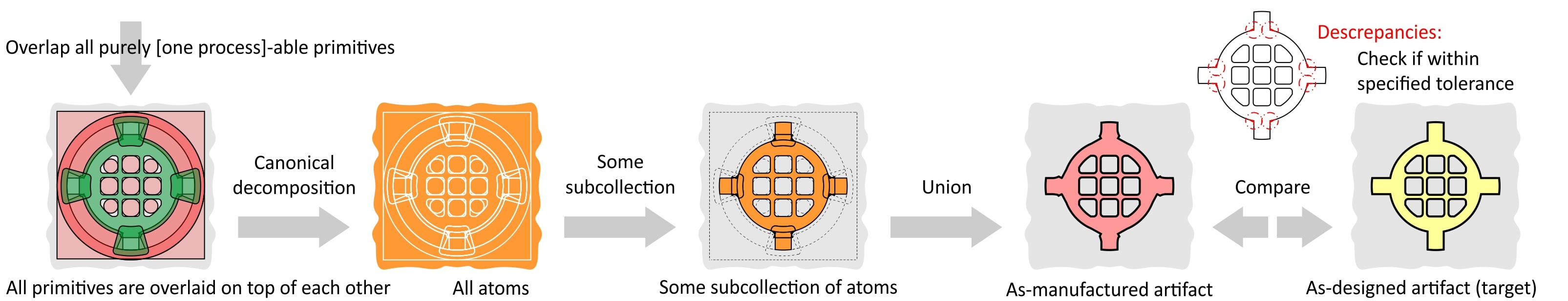}
	\caption{Our workflow for manufacturability analysis for multimodal processes (AM and SM used in arbitrary alternations). Compare the workflow with that of unimodal manufacturability analysis in Fig. \ref{fig_uni} (bottom) .} \label{fig_mult}
\end{figure*}

\subsection{Manufacturing Actions and Regions of Influence}

We view a manufacturing {\it process} as a finite sequence of manufacturing (either AM or SM) {\it actions} each defined by deposition/removal of a piece of material onto/from a part (Definitions \ref{def_action} and \ref{def_process}). The actions are idempotent---i.e., repetitive execution of any action is redundant. Each AM/SM action has a pre-defined `region of influence' (ROI) within which its material deposition/removal effect on the part's intermediate state (hereafter called the `workpiece') does not depend on whether or not there was material in the ROI beforehand, and outside which the action has no effect. However, the amount of material deposition/removal (hence the {\it cost}) does depend on preconditions inside the ROI, and the process planner has to take that into account.

The actions are {\it quanta of manufacturing}, whose ROIs are characterized by manufacturable portions of a part's interior/exterior for AM/SM, respectively. ROI shapes may be constrained by any number of conditions such as accessibility, collision avoidance, manufacturing resolution, operation-specific rules such as overhang angle thresholds, and so on. In this paper, ROI shapes are affected by two fundamental notions; namely, the machine's degrees of freedom (DOF) and the manufacturing instrument's minimum manufacturable neighborhoods (MMN) (Definition \ref{def_capability}). The former determines the class of relative {\it motions} that a CNC motion system supports, while the latter determines the {\it shapes} of the smallest features of ROI that may result from an action, which, in turn, depends on laser beam diameter, nozzle cross-section, cutter insert profile, etc. We show how each action's ROI is computed as the 3D pointset that can be swept by the MMN's shape, along a motion that respects the DOF, without colliding the instrument with surrounding obstacles or violating operation-specific rules.

One can think of ROIs as manufacturing {\it primitives} (Definition \ref{def_primitive}), using common terminology in constructive solid geometry (CSG) \cite{Requicha1977constructive}. Planning and search algorithms are then invoked to sequence the actions in a cost-optimal fashion by considering the effect of incrementally adding/removing material confined to each action's ROI. A plan may thus be represented as a Boolean formula in terms of ROIs (i.e., {\it as-planned} view), and is valid if its evaluation (i.e., {\it as-manufactured} view) results in a solid model that is equivalent to the design target (i.e., \emph{as-designed} view). The equivalence, commonly refereed to as part `interchangeability', is typically certified with respect to geometric dimensioning and tolerancing (GD\&T) \cite{ASME2009dimensioning}.

\subsection{Challenges Unique to Hybrid Manufacturing}

To appreciate the challenging nature of multimodal (i.e., interleaved AM-and-SM) processes, it is useful to first understand the important properties that substantially simplify unimodal (i.e., all-AM or all-SM) processes:
\begin{description}[style=unboxed,leftmargin=0cm] \setlength{\parskip}{2pt}
	\item[Monotonicity:] Every added action takes the state of the workpiece one step closer (in size) to the target in terms of deposited/removed material. Therefore the intermediate states of the part consistently increase (all-AM) or decrease (all-SM) in size, with no chance of going back. This guarantees that plans will terminate without any chance of undesirable (possibly never-ending) cycles of adding and removing the same material. Knowing that a process is monotonic justifies defining ROIs as the ``maximal'' depositable/removable regions (MDR/MRR), maximality being implied in terms of set containment, subject to manufacturing constraints---e.g., using a given nozzle/tool, at a fixed build/fixture orientation. We have shown in \cite{Nelaturi2015automatic} that MRR computations for milling of arbitrarily complex part/tool geometries can be posed as {\it inverse} configuration space problems and used to solve unimodal machining process planning problems (Fig. \ref{fig_smcost}). Similar methods were shown to compute MDRs for all-AM \cite{Nelaturi2015manufacturability}.
	
	\item[Permutativity:] Given a fixed set of actions/primitives, their cumulative effect (in terms of the {\it shape}) is invariant under permutation, i.e., insensitive to the {\it order} of execution. Permutativity of unimodal processes enables convenient decoupling of the two main tasks of manufacturability test and process planning. Both unions (AM) and intersections (SM) are commutative and associative among themselves (but not with each other). As a result, the unordered collection of primitives provides sufficient information to rapidly determine whether their application in an arbitrary order produces a part whose form is interchangeable with the target design, prior to the costlier task of planning. This is incredibly valuable as it can quickly detect if a collection of AM/SM primitives are (in)sufficient to make a part in a unimodal sequence, without spending computational resources to explore a dead-end.
\end{description}

Hybrid (i.e., multimodal) processes are neither monotonic nor permutative, making their analysis and planning more difficult. One solution is to break a multimodal sequence into unimodal sub-sequences, i.e., consecutive trains of unions or intersections, over which the said properties hold. The challenge is that interleaving AM/SM modalities as such does not suggest a path to an early manufacturability test, and the naive approach of nesting unimodal planners leads to prohibitive computing times.

\subsection{Early Manufacturability Test for HM Processes}

A more powerful framework than nesting emerges from a set-theoretic analysis of Boolean expressions, the likes of which have been used in  fundamental solid modeling applications ranging from representation conversion and maintenance \cite{Shapiro1991efficient,Shapiro1993separation} to design under spatial constraints \cite{Ilies2000shaping}. Every finite collection of primitives decomposes the 3D space into a finite number of disjoint {\it atomic} cells, obtained by intersecting all combinations of primitives and their complements \cite{Shapiro1997maintenance}.
Using arguments from basic set theory and logic, {\it the outcome of every finite Boolean expression is the union of some subcollection of atoms}. Physically valid manufacturing plans are a special class of such expressions, thus not every subcollection of atoms is manufacturable, but a shape is not manufacturable if it is not composable from whole atoms. This offers a path forward for an early (non-)manufacturability test for multimodal processes, notwithstanding the loss of permutativity. Given a target design, an early test checks if every atom is either completely inside or completely outside, up to discrepancies smaller than tolerance specs (Theorem \ref{theo_dnf}). The test is decisive for {\it non}-manufacturability, i.e., if it fails, the set of predefined actions cannot possibly make the part in any order of execution. As a necessary (though insufficient) condition, it helps avoiding inevitable dead-ends prior to spending computational resources on planning. More importantly, if new actions are to be added, the test provides insight as to what the new primitives should look like. If the test passes, manufacturability is not guaranteed, and ``false positives'' need to be ruled out during planning.

Figures \ref{fig_uni} and \ref{fig_mult} show the difference in the workflows for unimodal and multimodal manufacturability analysis.

To keep the discussion focused, we restrict ourselves to a high-level view of CAPP that has mainly to do with geometric and spatial complications of commingling AM/SM actions. This warrants a solid modeling approach to a first-order abstraction of manufacturing actions. If one or more qualitatively distinct sequences of actions are found to be (at least) geometrically  viable, lower-level analyses of tool-path planning and machine control as well as physical (e.g., mechanical/thermal) process simulation can be addressed subsequently. These problems are  typically confined to a single action's ROI and can be used to (in)validate candidate plans or  shortlist them based on domain-specific knowledge of physical/material constraints.

\section{Hybrid Manufacturing Processes} \label{sec_proc}

We choose the state space of the evolving workpiece as the collection of all 3D solids (i.e., compact-regular and semianalytic 3D sets, or `r-sets') \cite{Requicha1977mathematical}, denoted $\rset \subset \powerset(\euc)$ where $\euc = \R^3$ is the Euclidean $3-$space.%
	\footnote{$\powerset(U) = \{ X~|~ X \subseteq U \}$ stands for the powerset of a universe $U$.}
Motions used to sweep the MDR/MRR and construct ROIs are restricted to the collection of ``well-behaved'' \cite{Behandish2017analytic} subsets  $\rmotion \subset \powerset(\conf)$ of the Lie group $\conf := \SE{3}$ of rigid transformations, also called the {\it configuration space} ($\conf-$space) \cite{Lozano-Perez1983spatial}. The action of a rigid transformation (i.e., configuration) $\tau \in \conf$ on a single point $\bx \in \euc$ is denoted by $\pose \bx \in \euc$. By extension, applying a configuration to a pointset (including a solid) $S \subseteq \euc$ is denoted by $\pose S := \{ \pose \bx ~|~ \bx \in S \}$.

\begin{defn} (Manufacturing Action) \label{def_action}
	A manufacturing `action' is a mapping $\action: \rset \to \rset$ such that for
	all $S \in \rset$:
	\begin{align}
		\action^\uparrow(S) = (S \cupr \prim), \quad\text{for AM actions (i.e., $\prim \in \Prim^\uparrow$)}, \label{eq_action_am} \\
		\action^\downarrow(S) = (S \capr \overline{\prim}), \quad\text{for SM actions (i.e., $\prim \in \Prim^\downarrow$)}. \label{eq_action_sm}
	\end{align}
	for some fixed $\prim \in \rset$ called a manufacturing `primitive', marked AM or SM accordingly. $\cupr, \capr : \rset \times \rset \to \rset$ and $\overline{(\cdot)}: \rset \to \rset$ denote regularized%
		\footnote{To keep the notation simple, we use $\cupr, \capr, -$ for regularized operations commonly denoted $\cup^\ast, \cap^\ast, -^\ast$ in the literature \cite{Requicha1977constructive,Requicha1977mathematical}.}
	union, intersection, and complement operators to ensure algebraic closure of manufacturable states under finite sequences of actions.
\end{defn} 

\begin{defn} (Manufacturing Process) \label{def_process}
	A manufacturing `process' is a mapping $\process: \rset \to \rset$ that can be
	described by at least one finite composition of actions:
	\begin{equation}
		\process = (\action_n \circ \action_{n-1} \circ \cdots \circ \action_1), \quad
		\text{for some}~ n \geq 1. \label{eq_act_compose}
	\end{equation}
	A process is called `unimodal' if its actions are marked either all-AM or
	all-SM, and `multimodal' otherwise.
\end{defn}

A process can be expressed as state transitions from an initial state $S_0$ to a
final desired state $S_n = S_\asM$:
\begin{equation}
	S_0 \overset{\action_1}{\longrightarrow} S_1
	\overset{\action_2}{\longrightarrow} S_2 \overset{\action_3}{\longrightarrow}
	\cdots \overset{\action_n}{\longrightarrow} S_n = S_\asM . \label{eq_seq}
\end{equation}
A part is called `manufacturable' by a given set of actions if a sequence like
this exists such that the as-manufactured part $S_\asM$ is interchangeable with
the as-designed part $S_\asD$ (e.g., certified by GD\&T specifications).

Note that $\action^\uparrow(S) \supseteq S$ for AM whereas
$\action^\downarrow(S) \subseteq S$ for SM actions. Without loss of generality,
we assume $S_0 = \emptyset$ and the first action $\action_1$ is always additive.
This includes selecting a raw/bar stock $S_1 = \action^\uparrow_1(\emptyset) =
P_1$ as well as material deposition (e.g., 3D printing) on an empty platform.

\section{Hybrid Manufacturing Capabilities} \label{sec_cap}

\begin{figure*}
	\centering
	\includegraphics[width=0.95\textwidth]{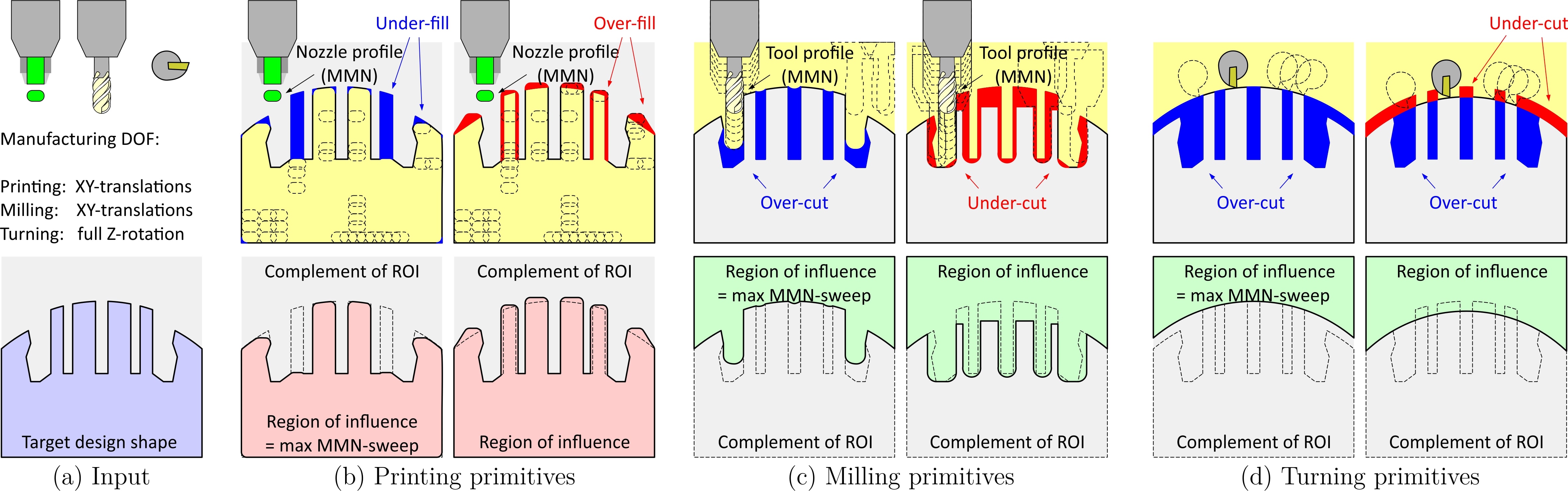} \caption{Manufacturing primitives obtained for 3 hybrid manufacturing capabilities (a) with U-AM/O-AM (b) O-SM/U-SM (c) and combination of the latter (d). Each primitive is obtained by sweeping the MMN along an allowable motion to get as close to the target as possible with a single capability. Different metrics can be used to quantify the ``fit'' between primitives and the target.} \label{fig_primitives}
\end{figure*} 

Informally, a manufacturing primitive is a 3D solid that is created by {\it exclusively one} manufacturing capability (e.g., 3D printing, turning, or milling) using a given instrument (e.g., nozzle or tool insert geometry) in a given fixed setup (e.g., build or fixture configuration), and is {\it closest} to the as-designed target $S_\asD$ with respect to a precisely defined metric for manufacturing progress. In other words, a primitive characterizes the best shot one can take using a single (either AM or SM) capability, to transition from a given state to a closer state to the target state, while {\it respecting the manufacturing constraints of that manufacturing capability.} The closeness to the target is often characterized by a volumetric difference. For example, the cost of a machining action is often approximated as the volume removed by that action divided by the tool's material removal rate (Fig. \ref{fig_smcost}) \cite{Nelaturi2015automatic}. Similarly, the cost of a deposition action can be approximated by the volume deposited in a tool path divided by the nozzle feed-rate \cite{Nelaturi2015manufacturability}.

Manufacturing constraints are abstracted by 1) degrees of freedom (DOF); and 2) minimum manufacturable neighborhoods (MMN) associated with a manufacturing instrument. The manufacturing DOF is a characterization of how different instruments can move; for instance:
\begin{itemize}
	\item Most 3D printing techniques are characterized by 3D translations, restricted to 2D horizontal layers.

	\item $3-$axis milling is characterized by 3D translations, often restricted to 2D surfaces depending on the cutting mode (e.g., end-, profile-, or contour-milling).
	
	\item Uniaxial turning is characterized by full (i.e., $360^\circ$) rotations combined with independent longitudinal and radial translations (e.g., side-, face-, or contour-turning) or screw motions, which are proportional rotations and translations (e.g., thread-turning).
\end{itemize}

As the instrument moves within these restricted regimes, its active part contacts the workpiece and modifies the workpiece's state by either of material deposition/removal, both of which can be geometrically simulated by {\it sweeping} the MMN along the motion. For example:
\begin{itemize}
	\item 3D printing is emulated by sweeping a shape that represents a droplet of solidified material, roughly the same size as the extrusion of a nozzle or laser beam cross-section by layer thickness.

\item $3-$axis milling is emulated by sweeping the rotating closure of a milling tool along the 3D motions.

\item Uniaxial turning is emulated by a rotational sweep (i.e., axisymmetric revolution) or helicoidal sweep of a tool insert around and along the turning axis.
\end{itemize}
The abstraction of a manufacturing capability based on DOF and MMN is sufficiently general to encompass many existing AM (e.g., SLA, FDM, SLS, SLM, and EBM), existing SM (e.g., turning, milling, drilling, and EDM), and conceivably, future capabilities that operate based on swept regions by relative spatial motions. Next, we make these notions precise for high-DOF motions.

\begin{defn} (Manufacturing Capability) \label{def_capability}
	A manufacturing `capability' is a product space $(\motions \times \mmfs)$ where
	\begin{itemize}
		\item $\motions \subseteq \rmotion$ is a collection of all allowable motions of a manufacturing instrument determined by the machine DOF and workspace size.

		\item $\mmfs \subseteq \rset$ is a collection of all available shapes of the MMN, which are inferred from the shapes of nozzles, laser beams, tool profiles, and so on.
	\end{itemize}
\end{defn}
A manufacturing capability is instantiated as a pair $(\motion, \mmf) \in (\motions \times \mmfs)$. For example, a turning capability is formally defined by the product space of two collections: $\motions$ is the collection of all motions comprised of a full rotation around the turning axis composed with all longitudinal and lateral translations within the machine's workspace, and $\mmfs$ is the collection of all shapes of available tool inserts. Similar definitions are possible for drilling, milling, printing, and other capabilities, based on their DOFs and MMNs. Generally, $\motions$ is a continuum, unless one desires to explicitly model indexed NC movements. On the other hand, $\mmfs$ is always a finite collection, further shortlisted for a particular length scale. Thus we can compute primitives for different instances of the same capability by considering one MMN at-a-time for a range of available motions.

\section{Hybrid Manufacturing Primitives} \label{sec_prim}

\begin{figure*}
	\centering
	\includegraphics[width=0.95\textwidth]{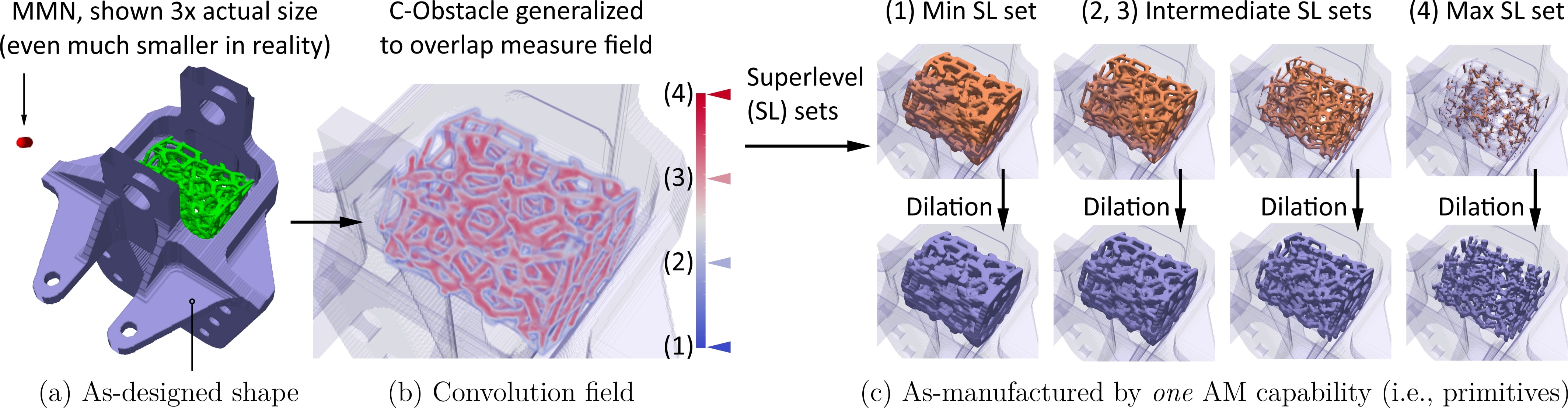} 	
	\caption{AM primitives obtained by varying the allowable overlap measure between a moving MMN and the as-designed shape. The same part as in Fig. \ref{fig_smcost} is used in (a), with the addition of a Voronoi lattice. The field of overlap values shown in (b) is computed as a convolution of indicator functions of $S_\asD$ and $\mmf$, whose superlevel sets are used for the maximal motion $\motion_{\max}$ to obtain the sweeps $S_\asM = \dil(\motion_{\max}^{-1}, \mmf)$.} \label{fig_amprims}
\end{figure*} 

\begin{defn} (Manufacturing Primitive) \label{def_primitive}
	A manufacturing `primitive' is obtained by sweeping any instantiation of a manufacturing capability to define a `region of influence' (ROI) for a manufacturing action.
\end{defn}

Using group morphology \cite{Lysenko2010group} as a proper setting for our formulation, we have $\prim = \dil(\motion^{-1}, \mmf)$ (useful for AM in \eq{eq_action_am}), or, equivalently, $\overline{\prim} = \ero(\motion^{-1}, \overline{\mmf})$ (useful for SM in \eq{eq_action_sm}), since the complement of a morphological dilation (i.e., generalized sweep) is the same as morphological erosion (i.e., generalized unsweep) of the complement \cite{Ilies1999dual,Nelaturi2011configuration}.%
	\footnote{$\sweep(\motion, \mmf) = \dil(\motion^{-1}, \mmf)$ and $\unsweep(\motion, \mmf) = \ero(\motion, \mmf)$ \cite{Lysenko2010group}.}
\begin{equation}
	\dil(\motion, \mmf) = \bigcup_{\pose \in \motion} \pose^{-1} \mmf ~~\text{and}~~ \ero(\motion, \mmf) = \bigcap_{\pose \in \motion} \pose^{-1} \mmf, \label{eq_sweep}
\end{equation}
Each primitive $\prim$ contributes to the making of the part through an action in \eq{eq_action_am} or \eq{eq_action_sm}. The following 4 classes of primitives are of particular interest:
\begin{itemize}
	\item For AM, if $\prim\subseteq S_\asD$ the action is an `under-fill' AM (U-AM); if $\prim \supseteq S_\asD$ it is an `over-fill' AM (O-AM).

	\item For SM, if $\prim \subseteq \overline{S}_\asD$ the action is an `under-cut' SM (U-SM); if $\prim \supseteq \overline{S}_\asD$ it is an `over-cut' SM (O-SM).%
		\footnote{The term `under-cut' is chosen to be consistent with its long-time usage in traditional manufacturing. Unfortunately, under-/over-cut in this context may sound counter-intuitive.}
\end{itemize}
Our goal is to compute a proper collection of manufacturing primitives from a knowledge of the target shape $S_\asD$ that are likely to finish an interchangeable outcome $S_\asM$.

Figure \ref{fig_primitives} illustrates the above 4 types of primitives for a simple 2D shape and three manufacturing capabilities; namely, printing, milling, and turning. Each primitive is obtained {\it independently} by analyzing the target shape without any regard for the other actions/primitives and the order in which they will  be applied to the workpiece.

To obtain a collection $\Prim$ of primitives, each obtained from \eq{eq_sweep}, whose Boolean combinations via \eq{eq_action_am} through \eq{eq_act_compose} can potentially meet a given target specification in a reasonable number of steps, we need a proper criteria to assess whether a given instantiation $(\motion, \mmf)$ whose resulting ROI obtained as  $\prim = \dil(\motion^{-1}, \mmf)$ ``fits best'' to the target design, i.e., the manufacturing action contributes to the overall making of the part to the best of its constrained capability. In our previous work on unimodal process planning (SM alone \cite{Nelaturi2015automatic} and AM alone \cite{Nelaturi2015manufacturability}) we argued that one such criterion comes from set-theoretic maximality of the ROIs. For SM actions, ROIs may be defined in terms of the maximal relative motion $\motion_{\max} \in \motions$ of the manufacturing instrument (i.e., cutting tool) that does not cause collisions with the target shape $S_\asD$. %
For instance, the maximal removable region (MRR) for a SM action is obtained by applying the maximal relative motion of the manufacturing instrument and workpiece that does not result in a collision between the target shape $S_\asD$ and the manufacturing instrument $\inst := (\mmf \cupr \rest)$, including both the MMN (e.g., tool insert) $\mmf$ and other moving components (e.g., spindle) collectively denoted by $\rest$. This motion is given by the complement of {\it configuration space obstacle} ($\conf-$obstacle), also known as the {\it free space} \cite{Lozano-Perez1983spatial}:
\begin{equation}
	\overline{\motion}_{\max} = \obs(S_\asD, \inst) = \big\{ \pose \in \motions ~|~ (S_\asD \capr \pose \inst) \neq \emptyset \big\}. \label{eq_obs_SM}
\end{equation}
Importantly, this approach of defining the ROIs captures {\it accessibility and collision avoidance} for manufacturing instruments of arbitrary shape. If accessibility is taken for granted or analyzed by other means, one can safely let $\rest := \emptyset$ thus $\inst = \mmf$ in subsequent expressions.

The MRR is then obtained as a sweep $\dil(\motion_{\max}^{-1}, \mmf)$ of the MMN (e.g., tool insert) $\mmf$ along the free space. For unimodal SM processes in which any under-cutting would be irreversible, U-SM primitives are not allowable, whereas the largest O-SM primitive is given by the MRR, noting that it is the maximal set (in terms of containment) that is fully contained in the part's complement \cite{Nelaturi2015automatic}.

Similarly, the maximal depositable region (MDR) for an AM action is swept by the maximal relative motion $\motion_{\max} \in \motions$ that keeps the MMN (e.g., a blob of material characterizing the smallest printable shape) fully inside the target shape $S_\asD$, which is the free space for the target shape's complement:
\begin{equation}
	\motion_{\max} = \obs(\overline{S}_\asD, \inst) = \big\{ \pose \in \motions ~|~ (\overline{S}_\asD \capr \pose \inst) \neq \emptyset \big\}. \label{eq_obs_AM}
\end{equation}
MMNs are much smaller for AM compared to SM, thereby making some over-filling tolerable with the understanding that SM post-processing can refine the functional surfaces up to required tolerance. Then $\motion_{\max}$ is redefined by relaxing the condition $(\overline{S}_\asD \capr \pose \inst) \neq \emptyset$ which is equivalent to $\mu[\overline{S}_\asD \capr \pose \inst] = 0$ with $\mu[\overline{S}_\asD \capr \pose \inst] = \lambda \mu[\mmf]$ where $\mu[\cdot]$ denotes the Lebesgue measure (i.e., volume in 3D) and $0 \leq \lambda < 1$ controls a smooth transition from U-AM with the moved MMN fully contained inside the design ($\lambda = 0$) to O-AM with the moved MMN having at least an $\epsilon-$overlap with the design ($\lambda = 1 - \epsilon$) where $\epsilon \to 0^+$ \cite{Nelaturi2015representation} (Fig. \ref{fig_amprims}). Over-filling will allow manufacturing features of $S_\asD$ that are smaller than the characteristic size of MMN, but it comes at the expense of thickening the part, which may need SM finishing of surfaces (e.g., for fit/assembly).

As illustrated in Fig. \ref{fig_primitives}, U-AM/O-SM actions alone---which are the only options in unimodal sequences---cannot produce thin walls/slot or sharp/dull corners, respectively, unless they are replaced with O-AM/U-SM actions followed by other O-SM/U-AM actions in a multimodal sequence. In practice, O-AM primitives are needed to leave some allowance for finishing by SM as a typical fabrication scenario \cite{Yamazaki2016development,Du2016novel}, and U-SM primitives are useful to carve out cracks for later refilling by AM as a repair mechanism \cite{Ren2007part,Liou2007applications}. Thus we use the MDR/MRR formulation to compute U-AM/O-SM primitives:
\begin{align}
	\prim_{\max} &= \dil(\overline{\obs(S_\asD, \inst)}^{-1}, \mmf), \quad\text{for U-AM actions}, \\
	\overline{\prim}_{\max} &= \ero(\overline{\obs(S_\asD, \inst)}^{-1}, \overline{\mmf}), \quad\text{for O-SM actions},
\end{align}
which can be simplified by exploiting relationships between dilation/erosion and Minkowski products/quotients \cite{Lysenko2010group}. Here we present the simplest case for a $3-$axis machine with translational DOF operating at a fixed build/fixturing orientation per primitive (i.e., $\motions \subset \powerset(\euc)$):
\begin{align}
	\prim_{\max} &= (S_\asD \ominus (-\inst)) \oplus \mmf, \quad\text{for MRR U-AM}, \label{eq_opening_UAM_trans} \\
	\overline{\prim}_{\max} &= (S_\asD \oplus (-\inst)) \ominus \mmf, \quad\text{for MDR O-SM}, \label{eq_closing_OSM_trans}
\end{align}
where $\oplus, \ominus: (\rset \times \rset) \to \rset$ are regularized Minkowski sum/difference,  and $-\inst = \{-\bx ~|~ \bx \in \inst\}$ denotes a reflection of the potentially non-symmetric $D = (\mmf \cupr \rest)$. Here, $\rest$ refers to the inactive moving components (relative to the workpiece) of the manufacturing instrument. These further reduce to the morphological opening/closing, respectively, of the target part $S_\asD$ with the `structuring element' $\mmf$ \cite{Serra1983image} when it is safe to assume $\rest = \emptyset$ thus $\inst = \mmf$.

Unlike maximal U-AM/O-SM, minimal O-AM/U-SM primitives are non-unique in the partial ordering of sets. For example, there are infinitely many ways to generate the simple O-AM/U-SM primitives illustrated in Fig. \ref{fig_corner} with a ``bulged'' corner by adding configurations that populate copies of the MMN at the corner such that ROI will then cover the originally under-filled or over-cut regions---e.g., using a few additional weld-spots for O-AM or drill-holes for U-SM. Many so-obtained primitives will be minimal (despite being non-unique), i.e., not  reducible to another sweepable primitive fully contained inside them. Although they can be perceived as {\it economic} choices, they may not be suitable when continuous motions are preferred to applying the additional deposition/removal at discrete spots. A {\it conservative} alternative would be to sweep the MMN along the entire shape at allowable orientations. For the simplest example of $3-$axis translational DOF at a fixed build/fixturing orientation, the primitives are obtained by:
\begin{align}
	\prim_{\cons} &= (S_\asD \ominus (-\rest)) \oplus \mmf, \quad\text{for cons. O-AM}, \label{eq_opening_UAM_trans} \\
	\overline{\prim}_{\cons} &= (S_\asD \oplus (-\rest)) \ominus \mmf, \quad\text{for cons. U-SM}, \label{eq_closing_OSM_trans}
\end{align}
which reduce to global thickening/shrinking (i.e., generalized $\pm-$offset) of the target design by MMN when is safe to assume $\rest = \emptyset$ thus $\inst = \mmf$. 

\begin{figure}
	\centering
	\includegraphics[width=0.45\textwidth]{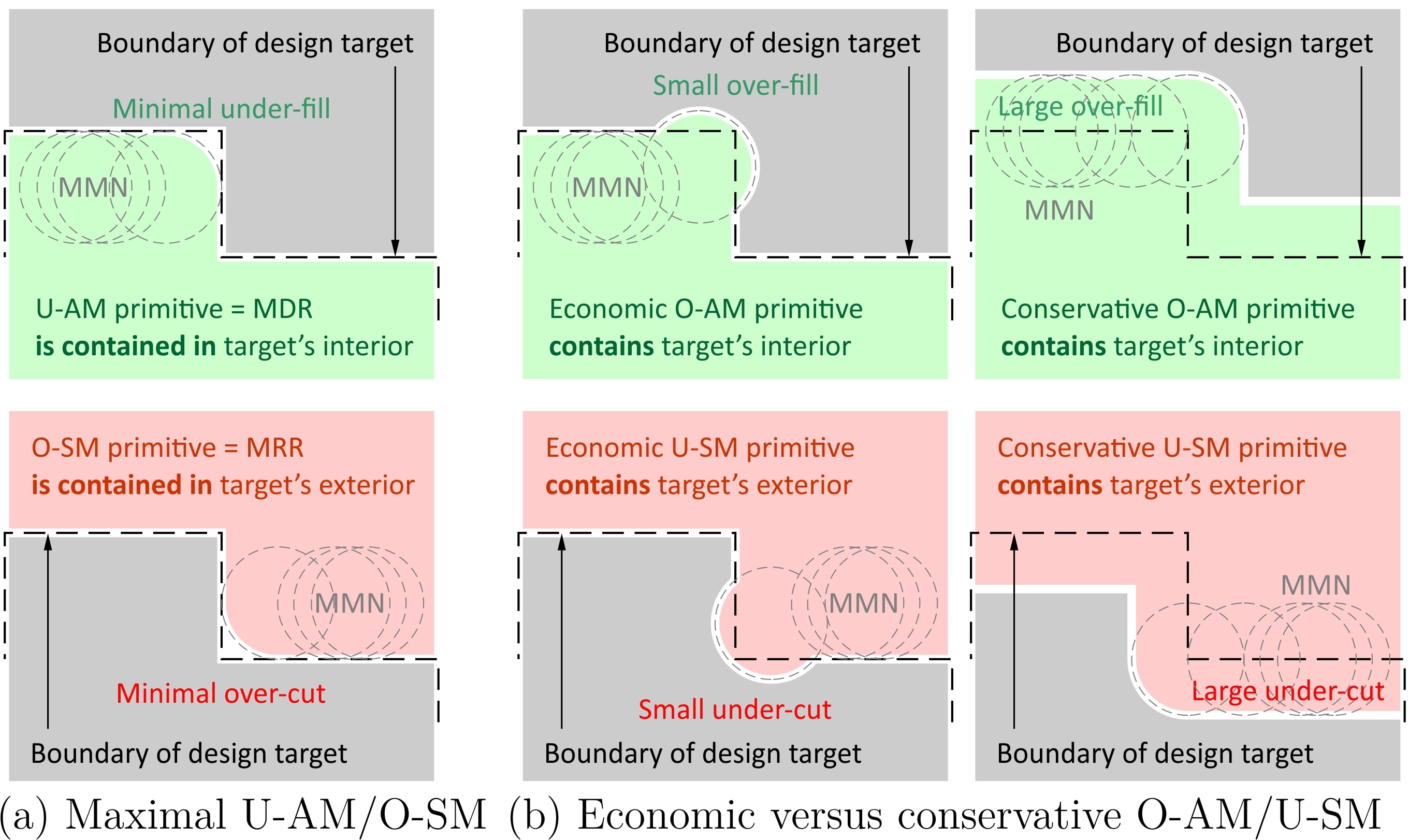}
	\caption{U-AM/O-SM primitives (the only options in unimodal AM/SM), if modeled by sweeps of a round MMN, cannot produce both sharp and dull corners. In a multimodal sequence, this is possible by combining U-AM+U-SM and O-AM+O-SM.} \label{fig_corner}
\end{figure}

It has been shown \cite{Lysenko2010group,Nelaturi2012rapid,Behandish2016analytic} that morphological operations over $\conf$ may be computed efficiently using the correspondence with group convolutions. Using this correspondence, the Minkowski sum/difference of sets may be calculated efficiently as min/max superlevel sets of convolutions of their indicator functions, while other levelsets in between correspond to $\lambda-$overlap measures (Fig. \ref{fig_amprims}) discussed earlier with regards to relaxing \eq{eq_obs_AM}. Convolutions, in turn, may be implemented via fast Fourier transforms (FFT) as a result of the well-known {\it convolution theorem}, thereby making the computations run in $O(n \log n)$ time for a sample size of $n$. This approach is agnostic of the geometric complexity of part or tool profiles and may be implemented rapidly on GPUs. The 3D primitives in Section \ref{sec_results} have been computed using this approach.

In many cases one needs a localized offset around specific features. In those cases, it is unlikely for a generic set-theoretic recipe to provide suitable primitives and a more localized approach may be preferable. The primitives could be obtained by breaking the shape $S_\asD$ into a number of components---e.g., using feature-recognition or domain decomposition  methods \cite{Woo1982feature}---and applying the morphological operations mentioned above to each component separately. Note that neither the components nor their primitives are required to be mutually disjoint, since our model of successive execution (Definition \ref{def_process}) allows for overlapping primitives and the decomposition described in Section \ref{sec_decomp} will take care of mutual intersections.

It is important to emphasize that the validity of the manufacturability test and process planning formulated in the following sections are completely independent of the choice of primitives. {\it The following results are valid for manufacturing primitives of arbitrary geometric shapes}, no matter how they were generated, automatically as described above or manually by an expert user, whether they respect the machine DOF and MMN constraints, or if they are maximal/minimal with respect to any metric.

\section{Hybrid Manufacturability Analysis} \label{sec_decomp}

\begin{figure*}
	\centering
	\includegraphics[width=0.95\textwidth]{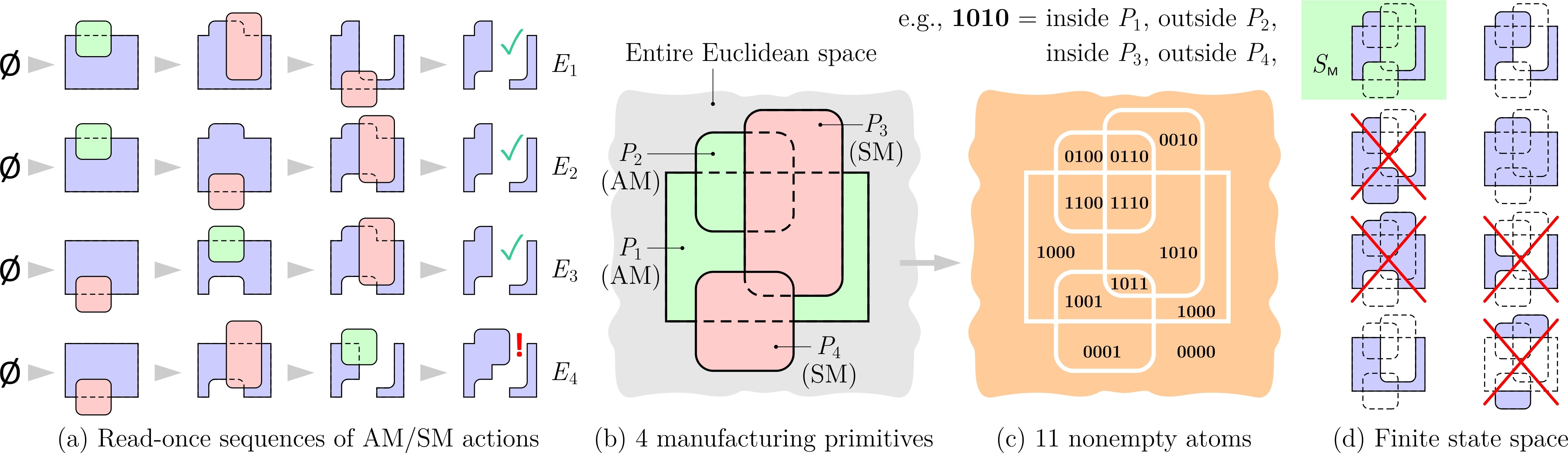}
	\caption{Unlike unimodal processes, multimodal processes are not permutative (a). However, every manufacturing solid using a given collection of the primitives (b) is the union of some subcollection of their canonical intersection terms (c). The converse is not true (d).} \label{fig_decomp}
\end{figure*}

The as-planned view is expressed as a finite Boolean expression (FBE) $\expr[\Prim]$ in terms of elements of $\Prim$, whose evaluation produces the as-manufactured view $S_\asM$. Not every FBE is a `valid' plan---i.e., one that has a meaningful physical realization as a sequence of AM/SM actions. The validity conditions are as follows:

\begin{defn} (Manufacturing Plan) \label{def_plan}
	A manufacturing `plan' $\expr[\Prim]$ is a FBE over $\Prim$ that satisfies the following:
	\begin{description} [style=unboxed,leftmargin=0cm]
		\item [C1] the FBE is `anti-balanced' in terms of its CSG-tree representation, meaning that every intermediate node in the tree has at least one leaf child:
		\begin{equation}
			\centering
			\begin{tikzcd}[row sep=0.05em, column sep=small]
				\Prim^\uparrow \arrow[rd, "\text{pick}"] & ~ & \Prim^\ast \arrow[rd, "\text{pick}"] & ~ & \Prim^\ast \arrow[rd, "\text{pick}"] \\
				~ & \bigcirc \arrow[rd] & ~ & \bigcirc \arrow[rd] & ~ & \cdots \arrow[rd] \\
				\Prim^\ast \arrow[ru, "\text{pick}"'] & ~ & \bigcirc \arrow[ru] & ~ & \bigcirc \arrow[ru] & ~ & S_\asM \\
				~ & \Prim^\ast \arrow[ru, "\text{pick}"'] & ~ & \Prim^\ast \arrow[ru, "\text{pick}"']
			\end{tikzcd} \label{eq_csg}
		\end{equation}
		where $\bigcirc$ embodies a Boolean operation with or without set complement. For example, $(\prim_1 \cupr \prim_2) \capr \prim_3$ is anti-balanced, but $(\prim_1 \cupr \prim_2) \capr (\prim_3 \cupr \prim_4)$ is not.

	\item [C2] AM/SM primitives appear in disjunctive/conjunctive and positive/negative forms, respectively, i.e., respect the AM/SM action semantics in \eq{eq_action_am} or \eq{eq_action_sm}---AM actions always appear as $((\cdot) \cupr \prim)$ while SM actions appear as $((\cdot) \cap \overline{P})$.
	
	\item [C3] The first primitive must be additive, to model initiating the process by actual material deposition (e.g., 3D printing) or selection of a raw/bar stock.
	
	\item [C4] The subsequent primitives must be picked from $\Prim^\ast = (\Prim - \Prim_\raw)$ (excluding raw/bar stocks), to model modifications by actual material deposition/removal.
	\end{description}
\end{defn}

The above validity conditions define the search space of physically meaningful HM plans, of which a smaller subset of plans produce the desired outcome. A given part $S_\asD$ is `manufacturable' via a predefined collection of AM/SM primitives, if there exists at least one FBE that satisfies \textbf{C1--4} and whose outcome $S_\asM$ is interchangeable with $S_\asD$.

In the special case of unimodal processes (all-AM or all-SM) with arbitrary ROIs, or multimodal processes with completely separated ROIs, all plans that consume every primitive in $\Prim$ {\it have only one possible outcome}; namely, $S_\asM = \bigcup_{1 \leq i \leq n} \prim_i$ (all-AM), $S_\asM = \bigcap_{1 \leq i \leq n} \overline{\prim}_i$ (all-SM), or their combination in arbitrary orders (disjoint ROIs). {\it Unimodal manufacturability is decidable prior to planning,} i.e., the test for interchangeability of $\expr[\Prim]$'s output shape $S_\asM$ depends only on $\Prim$ and is the same for all expressions with different orders of depositing/removing the primitives. The difference is in the cost of manufacturing as well as physical and operational constraints that are not accounted for in the geometric model, and will be evaluated during planning. This decoupling of testing for manufacturability (i.e., if there is any chance to find a plan) from the task of shortlisting the most cost-optimal and practical plans is key to making unimodal AM/SM planning with arbitrary primitives tractable.
This approach can be extended to multimodal HM planning for a predefined structure with a few (i.e., $O(1)$) unimodal subsequences that are permutative amongst themselves with respect to the geometric outcome. %
However, we would like to enable HM for general sequences without such restraints, in which the lack of permutativity leads to a combinatorial blow-up of the search space without any a priori guarantees of manufacturability. In this section we show that early tests for manufacturability can be obtained without making restrictive assumptions about the HM sequence.

The space of all finite Boolean functions (FBF) applied to the elements of $\Prim$ forms a so-called free/finite Boolean algebra (FBA). The primitives are the FBA's `generators', i.e., a distinguished subset of the algebra from which every other element can be produced using a finite number of Boolean operations. Importantly, every FBA has another set of distinguished elements called canonical intersection terms (i.e., `atoms') from which every other element of the algebra can be disjointly assembled. This results from logical and set-theoretic principles, and is true {\it irrespective of the geometric or topological characteristics of the space in which the elements of algebra are embedded,} reinforcing the approach's validity for all dimensions (e.g., 2D and 3D) and arbitrarily complex shapes.

\begin{defn} (Manufacturing Atoms)
	Given a collection of manufacturing primitives $\Prim = \{\prim_1, \prim_2, \ldots, \prim_n\}$ as the FBA generators, the manufacturing atoms are the FBA's canonical intersection terms $\Atom = \{ \atom_1, \atom_2, \ldots, \atom_m\}$:
	\begin{equation}
		\atom_j = \bigcap_{1 \leq i \leq n} Q_{i,j} = (Q_{1, j} \capr Q_{2, j} \capr \cdots \capr Q_{n, j}), \label{eq_atom}
	\end{equation}
	where $Q_{i, j}$ is either $\prim_i$ or $\overline{\prim}_i$, depending on $j = 1, 2, \ldots, m$, e.g., on the $i$\th digit of its binary encoding. There are $2^n$ possible combinations but $m = |\Atom| \leq 2^n$ since all empty intersections produce the same `empty atom', which proves to be critical in subsequent discussions.
\end{defn}
	
The atoms are mutually disjoint cells that partition Euclidean space into {\it the smallest deposition{\rm/}removal units that an arbitrary combination of actions can produce}---not in isolation, but with implicit co-appearance conditions imposed by the composition of primitives from atoms. All points in each partition have identical membership classification against every primitive. Unlike a primitive, an atom does not represent physical significance other than which AM/SM primitives can potentially contribute to its existence (or lack thereof) at any state transition along the process. All atoms that comprise a given primitive need to co-appear/disappear when an action is executed.

The atomic decomposition is computed directly from the {\it unordered} collection of primitives. Once obtained, it provides a volumetric spatial enumeration scheme to represent all manufacturable parts by enumerating a finite subset of atoms, i.e., specifying which atoms are in/out. Each atom is either included or excluded in its entirety (i.e., no partial inclusions) in every manufacturable part, regardless of what plan makes it. As the name suggests, {\it an atom cannot be split by any manufacturing plan;} a fact that enables HM manufacturability analysis prior to the costlier task of process planning.

\begin{theo} (Necessary Condition for Manufacturability) \label{theo_dnf}
	Every manufacturable shape by a predefined collection of primitives is a union of a subcollection of nonempty atoms:
	\begin{equation}
		S_\asM = \bigcup\Atom \langle \bJ_\asM \rangle = \bigcup_{j \in \bJ_\asM} \atom_j, ~\text{for some}~ \bJ_\asM \subseteq \{1, 2, \ldots, m\}. \label{eq_subcol}
	\end{equation}
	The notation $\Atom \langle \bJ \rangle = \{ \atom_j ~|~ j \in \bJ\}$ is used hereafter to refer to an atomic enumeration of any solid in the FBA.
\end{theo}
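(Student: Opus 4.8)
The plan is to prove the stronger statement that \emph{every} finite Boolean expression $\expr[\Prim]$ over the generators---not merely a valid plan---evaluates to a union of atoms; the theorem then follows because every manufacturable $S_\asM$ is by definition the evaluation of some such expression (one additionally satisfying \textbf{C1--4}, though those conditions play no role here). I would proceed by structural induction on the CSG-tree of $\expr[\Prim]$, after first establishing the two facts that make the induction go through.

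First I would prove the \emph{partition lemma}: the atoms are pairwise disjoint and their union is all of $\euc$. For a generic point $\bx$ (off the measure-zero boundaries, consistent with the regularized, almost-everywhere setting), the membership vector $(\indic_{\prim_1}(\bx), \ldots, \indic_{\prim_n}(\bx)) \in \{0,1\}^n$ is well defined and selects exactly one combination $(Q_{1,j} \capr \cdots \capr Q_{n,j})$ containing $\bx$; hence $\bx$ lies in precisely one atom. Two atoms with distinct index disagree on at least one literal, say the $i$\th, so $\atom_j \capr \atom_{j'} \subseteq \prim_i \capr \overline{\prim_i} = \emptyset$ and they are disjoint. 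As an immediate corollary each generator decomposes as $\prim_i = \bigcup\{\atom_j : Q_{i,j} = \prim_i\}$, since every point of $\prim_i$ lies in a unique atom whose $i$\th literal must be the positive one.

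Next I would show that the family $\mathcal{U}$ of all unions of atoms is closed under the three regularized operations---this is where the partition lemma pays off. If $S = \bigcup_{j \in \bJ_S} \atom_j$ and $T = \bigcup_{j \in \bJ_T} \atom_j$, then $S \cupr T = \bigcup_{j \in \bJ_S \cup \bJ_T} \atom_j$; disjointness of distinct atoms gives $S \capr T = \bigcup_{j \in \bJ_S \cap \bJ_T} \atom_j$; and the partition property gives $\overline{S} = \bigcup_{j \notin \bJ_S} \atom_j$. Hence $\mathcal{U}$ is a subalgebra containing every generator. Structural induction now closes the argument: the leaves of the CSG-tree of $\expr[\Prim]$ are primitives, all in $\mathcal{U}$, and each internal node applies one of the three operations under which $\mathcal{U}$ is closed, so the root value $S_\asM \in \mathcal{U}$. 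Since the empty atom (from contradictory literals) contributes nothing, we may take $\bJ_\asM$ to index only nonempty atoms, yielding exactly \eq{eq_subcol}.

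The main obstacle is not the combinatorics---which is the classical disjunctive-normal-form argument for finite Boolean algebras---but the \emph{regularized} semantics of $\cupr$, $\capr$, and $\overline{(\cdot)}$ on r-sets, under which naive set identities can fail on lower-dimensional debris. I would neutralize this by invoking the fact that $(\rset, \cupr, \capr, \overline{(\cdot)})$ is itself a Boolean algebra \cite{Requicha1977mathematical}: the complementation laws $\prim_i \capr \overline{\prim_i} = \emptyset$ and $\prim_i \cupr \overline{\prim_i} = \euc$ then hold exactly in $\rset$, so the partition and closure identities transfer verbatim to the finite subalgebra generated by $\Prim$, all equalities being read in the $\eqae$ sense used throughout the paper.
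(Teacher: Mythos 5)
Your proposal is correct and takes essentially the same route as the paper's own proof: both argue by structural induction over the CSG-tree of the expression that every finite Boolean expression over $\Prim$ evaluates to a union of atoms (so the validity conditions \textbf{C1--4} are irrelevant to this direction), the paper phrasing the induction pointwise---points classifying identically against all primitives must classify identically against $S_\asM$---while you phrase it as closure of the family of atom-unions under $\cupr$, $\capr$, and complement. Your write-up additionally supplies the partition lemma and the caveat about regularized operations on r-sets that the paper compresses into ``straightforward to prove by induction,'' so it is a more complete rendering of the same argument rather than a different one.
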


\begin{proof}
	It is straightforward to prove by induction over \eq{eq_csg} that every pair of points that are in the same atom, i.e., classify the same way against all primitives, have to classify the same way against $S_\asM = \expr[\Prim]$ as well. Note that this is true {\it regardless of the order of appearance of primitives.} Thus every outcome of an FBE has to include either all or none of the points in any given atom. This includes manufacturable shapes that are (by definition) outcomes of manufacturing plans (i.e., a special form of FBE).
\end{proof}

Theorem \ref{theo_dnf} gives a {\it necessary but insufficient} condition for manufacturability. The FBA is a superset of all manufacturable solids since every manufacturable solid admits an atomic cover, but the converse is not necessarily true. Given an as-designed target $S_\asD$ and criteria for part interchangeability (e.g., GD\&T \cite{ASME2009dimensioning}), once the primitives are computed using any method of choice and their atomic decomposition is determined, the next step before process planning is to test if there exists a subcollection $\Atom \langle \bJ_\asM \rangle$ of nonempty atoms as in \eq{eq_subcol} whose union passes the interchangeability test. In the most strict scenario where precise equality is required (i.e., $S_\asD = S_\asM$), one needs to test if all atoms in $\Atom \langle \bJ_\asM \rangle$ are either fully inside or fully outside the target design, i.e., $(\atom_j \capr \overline{S}_\asD) = \emptyset$ or $(\atom_j \capr S_\asD) = \emptyset$ for all $j \in \bJ_\asM$. If there exists even one atom that partially collides with $\overline{S}_\asD$ and $S_\asD$, it becomes immediately evident that no Boolean expression $\expr[\Prim]$ can precisely produce $S_\asD = S_\asM$. When the equality is relaxed and slight deviations (e.g., within some tolerance zones) are allowed, the said two intersections must be either empty or fully contained inside the tolerance zone. ``Tolerant'' atomic decomposition is a subject of ongoing research.

\section{Hybrid Process Plan Enumeration} \label{sec_hmplans}

The subsequent analysis resides in the realm of Boolean logic and is not affected by geometric complexities. Using this to our advantage, we use simple overlapping boxes (akin to Venn diagrams) like the ones in Fig. \ref{fig_decomp} to represent {\it generic} primitives for the remaining discussion, before demonstrating applicability to real 3D examples for HM in Section \ref{sec_results}. These schematics will serve to present ideas that are difficult to illustrate on real 3D primitives.

It is convenient to use bit-strings to represent the atom indices as $j = \prod_{1 \leq i \leq n} 2^{n-i} b_i$, where the bit $b_i = 1$ or $0$ determines how $\atom_j$ classifies against $\prim_i$, i.e., completely inside or completely outside. For example, given the four primitives of Fig. \ref{fig_decomp} (b), there are $2^4 = 16$ canonical intersections. $11$ of them create nonempty atoms in this case, indexed accordingly in Fig. \ref{fig_decomp} (c). Here are some of them to illustrate the indexing scheme:
\begin{align}
	A_{0100} &= (\overline{\prim}_1 \capr \prim_2 \capr \overline{\prim}_3 \capr \overline{\prim}_4), \label{eq_atom1} \\
	A_{1000} &= (\prim_1 \capr \overline{\prim}_2 \capr \overline{\prim}_3 \capr \overline{\prim}_4), \label{eq_atom2} \\
	A_{1100} &= (\prim_1 \capr \prim_2 \capr \overline{\prim}_3 \capr \overline{\prim}_4), \qquad\ldots \label{eq_atom3}
\end{align}
One can think of the indices as the outcome of membership classifications of any representative point inside each atom against all primitives in some fixed order.

By substituting the conjunctive expression in \eq{eq_atom} for the subcollection of nonempty atoms $\Atom \langle \bJ_\asM \rangle$ in the disjunctive expression in \eq{eq_subcol}, we obtain a disjunctive normal form (DNF) for the manufactured outcome. For example, the 3 nonempty atoms in \eq{eq_atom1} through \eq{eq_atom3} collectively represent a decomposition of a potentially manufacturable solid shown in Fig. \ref{fig_decomp} (d) (top-left) decomposed as:
\begin{equation}
	S_\asM = ~(A_{0100} \cupr A_{1000} \cupr A_{1100}). \label{eq_atomic}
\end{equation}
Substituting \eq{eq_atom1} through \eq{eq_atom3} in \eq{eq_atomic} yields a DNF denoted hereafter by $\expr^\dag[\Prim]$. It has no physical meaning by itself in terms of sequence of AM/SM actions. It rather prescribes the disjoint atomic pieces (i.e., conjunctive clauses, or `implicants') that fully cover the interior of a manufactured part and how each piece classifies against the AM/SM primitives. The next question is whether the DNF {\it can be rearranged by purely symbolic manipulation to a valid manufacturing plan} qualified via conditions \textbf{C1--4} of Definition \ref{def_plan}, that is, an anti-balanced CSG-tree that is respectful of AM/SM modalities.
The caveat is that symbolically different DNFs can be formed, corresponding to different FBFs, to produce the same outcome $S_\asM$ solely because of the existence of {\it empty atoms}. While the `minimal' DNF is formed by the fewest conjunctive clauses corresponding to nonempty atoms, any combination of the clause(s) that correspond to empty atoms can also be included to formulate an equally valid `enriched' DNF. All (minimal or enriched) DNFs produce the same manufactured shape, nonetheless represent {\it logically different} FBFs. If one desires to reframe the problem in purely symbolic language, one has to incorporate all shape-related information into logical reasoning by means of DNF enrichment.

\begin{theo} (Sufficient Condition for Manufacturability) \label{theo_pro}
	The union of a given subcollection of nonempty atoms is manufacturable if at least one of its DNFs (potentially enriched with empty atoms) is logically equivalent to an FBE that satisfies \textbf{C1--4} of Definition \ref{def_plan}.
\end{theo}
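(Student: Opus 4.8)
The plan is to treat this as a direct sufficiency argument: assume the hypothesis---that some (possibly enriched) DNF of $S_\asM$ is logically equivalent to an FBE $\expr[\Prim]$ satisfying \textbf{C1--4}---and conclude that $S_\asM$ is manufacturable. The conceptual key is to keep two notions strictly apart: the \emph{structural} validity of a plan, which by Definition \ref{def_plan} is a purely syntactic property of the CSG-tree (anti-balance, AM/SM modality, additive start, non-raw continuation), and the \emph{geometric} outcome the plan evaluates to. First I would observe that these are decoupled: any FBE obeying \textbf{C1--4} is a bona fide manufacturing plan regardless of which shape it produces, so the hypothesized $\expr[\Prim]$ is automatically a valid plan and its only remaining job is to evaluate to $S_\asM$.

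Next I would establish the bridge from logical equivalence to geometric identity, reusing the mechanism behind Theorem \ref{theo_dnf}. Because the atoms partition $\euc$ and every point in a single atom classifies identically against all of $\prim_1, \ldots, \prim_n$, the geometric evaluation of any FBE is completely determined by its underlying finite Boolean function: two FBEs that agree on all $2^n$ atom-classifications necessarily select the same subcollection of atoms and hence yield the same pointset. Consequently, since $\expr[\Prim]$ is logically equivalent to the chosen DNF, it selects exactly the atoms that DNF selects.

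To close the loop I would account for the empty atoms. The enriched DNF differs from the minimal DNF $\expr^\dag[\Prim]$ only in which \emph{empty} atoms it includes; both therefore select the same nonempty atoms, namely the members of $\Atom \langle \bJ_\asM \rangle$. Since empty atoms contribute no points, the geometric evaluation of $\expr[\Prim]$ is $\bigcup_{j \in \bJ_\asM} \atom_j = S_\asM$. Thus $\expr[\Prim]$ is a valid plan whose output is $S_\asM$, and by the definition of manufacturability $S_\asM$ is manufacturable.

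I expect the proof itself to be short, because the real work already lives in the framework---the atomic partition, Theorem \ref{theo_dnf}, and the syntactic conditions \textbf{C1--4}. The only point demanding care is the logical-versus-geometric distinction together with the role of enrichment: enrichment by empty atoms is precisely what supplies the freedom to match a Boolean function that admits a \textbf{C1--4} realization, since the minimal DNF of $S_\asM$ need be neither anti-balanced nor respectful of AM/SM modality. The genuinely hard problem---constructively \emph{finding} such an enriched DNF and rearranging it into an anti-balanced, modality-respecting CSG-tree---is a synthesis and search task rather than part of this sufficiency statement, and I would defer it to the process-planning section.
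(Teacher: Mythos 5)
Your proposal is correct and follows essentially the same route as the paper's proof: a valid plan exists by hypothesis, logical equivalence forces the same atom selection and hence the same geometric evaluation, and enrichment by empty atoms contributes no points, so the plan's output is $S_\asM$. Your write-up is in fact a cleaner, forward-directed organization of the same argument (the paper spends more of its proof characterizing how the plan's DNF relates to the minimal one via the index sets $\bJ^\ast$ and $\bJ_\asM$), but no new idea is introduced.
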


\begin{proof}
	The equivalent plan (by definition) produces a manufacturable shape $S_\asM = \expr[\Prim]$ and $\expr[\Prim] = \bigcup \Atom \langle \bJ_\asM \rangle$ for $\bJ_\asM$ containing a subset of indices of nonempty atoms (Theorem \ref{theo_dnf}). There always exists a DNF $\expr^\dag[\Prim]$ that is equivalent to $\expr[\Prim]$, thus $S_\asM = \expr^\dag[\Prim]$. The implicants of the DNF correspond to some subcollection of atoms with indices $\bJ^\ast$ (including indices that may create empty atoms) or their simplifications (by conjunction) into simpler implicants, i.e., $\expr^\dag[\Prim] = \bigcup \Atom \langle \bJ^\ast \rangle$ meaning that $\Atom \langle \bJ_\asM \rangle$ and $\Atom \langle \bJ^\ast \rangle$ can differ only by the empty atom $(\Atom \langle \bJ^\ast \rangle - \Atom \langle \bJ_\asM \rangle) \subseteq \{ \emptyset\}$. This means that $(\bJ^\ast - \bJ_\asM)$ can only include indices corresponding to empty canonical intersections. The DNF $\expr^\dag[\Prim]$ differs from minimal DNF by uniting implicants corresponding to $(\bJ^\ast - \bJ_\asM)$, i.e., is its enrichment.
\end{proof}

\begin{figure}
	\centering
	\includegraphics[width=0.48\textwidth]{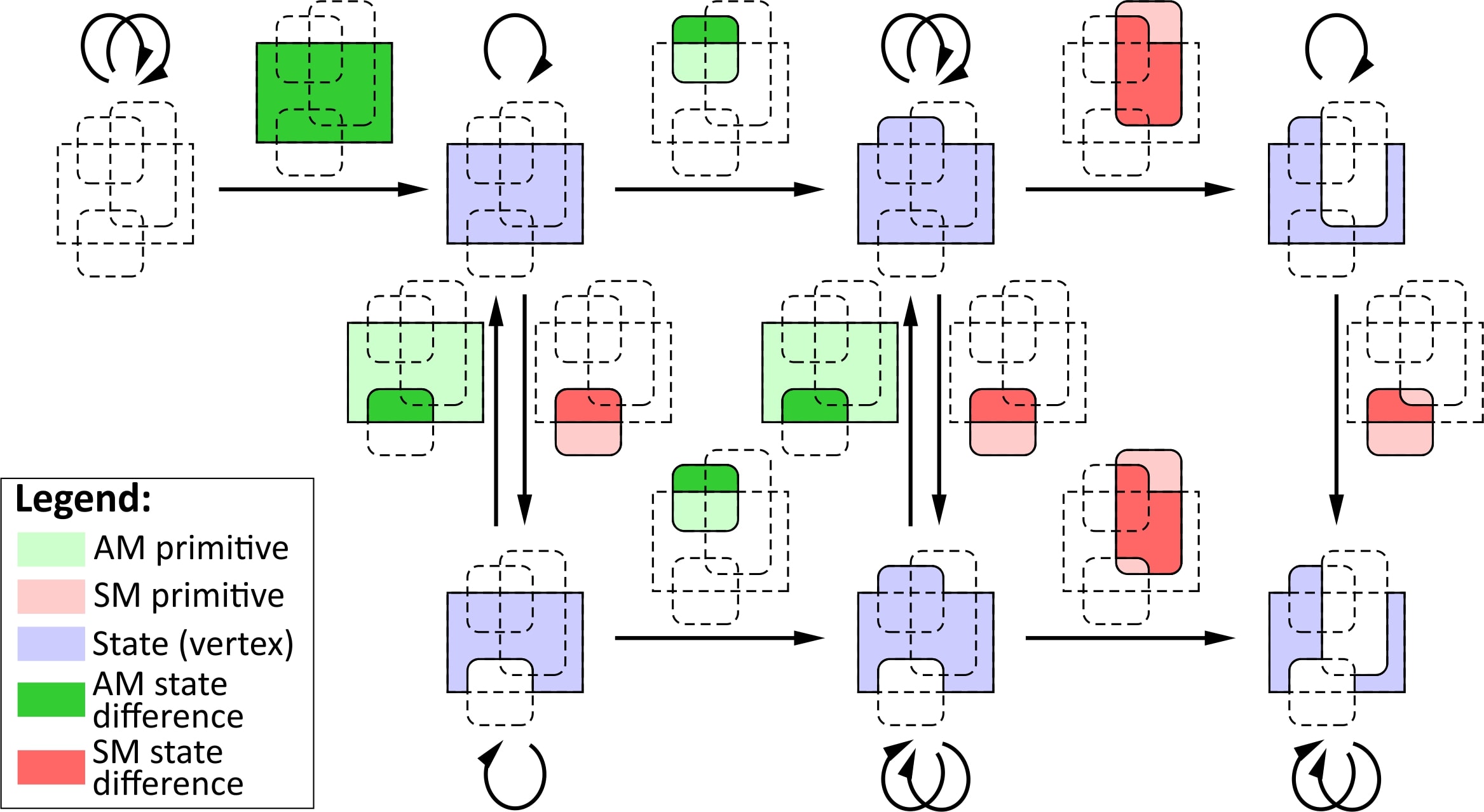}
	\caption{State transitions through manufacturing plans are conceptualized by turning atoms on and off as a result of union/intersection with AM/SM primitives, respectively. The cost along each path is a linear function of state difference volumes and action cost factors.} \label{fig_graph}
\end{figure}

For instance, the minimal DNF corresponding to \eq{eq_atomic} for the part in Fig. \ref{fig_decomp} (d) (top-left) can be united with the implicants corresponding to any subcollection of the $5$ empty atoms $A_{0011}$, $A_{0101}$, $A_{0111}$, $A_{1101}$, and $A_{1111}$ to obtain a different DNF, each of which may or may not be rearrangeable into valid plans. %
For example, the minimal DNF itself is equivalent to the top two plans in Fig. \ref{fig_decomp} (a):
\begin{align}
	\expr_1[\prim_1, \prim_2, \prim_3, \prim_4] &= ( ( ( \prim_1 \cupr \prim_2 ) \capr \overline{\prim}_3 ) \capr \overline{\prim}_4 ), \label{eq_examp1} \\
	\expr_2[\prim_1, \prim_2, \prim_3, \prim_4] &= ( ( ( \prim_1 \cupr \prim_2 ) \capr \overline{\prim}_4 ) \capr \overline{\prim}_3 ), \label{eq_examp2}
\end{align} 
as well as the other two plans (not shown) obtained by permuting $\prim_1$ and $\prim_2$. However, the third plan in Fig. \ref{fig_decomp} (a) is equivalent to a different DNF obtained by uniting two more implicants corresponding to $A_{0101}$ and $A_{1101}$:
\begin{equation} 
	\expr_3[\prim_1, \prim_2, \prim_3, \prim_4] = ( ( ( \prim_1 \capr \overline{\prim}_4 ) \cupr \prim_2 ) \capr \overline{\prim}_3 ). \label{eq_examp3}
\end{equation}
$\expr_1$ and $\expr_2$ are {\it logically} equivalent to each other for all shapes and configurations of primitives, but not to $\expr_3$ since the latter represents a different FBF in general. But they are {\it conditionally} equivalent, i.e., produce the same outcome for the shapes and configurations in Fig \ref{fig_decomp} (b) that results in $(P_2 \capr P_4) = \emptyset$ leading to $\atom_{\ast 1 \ast 1} = \emptyset$. The symbol ``$\ast$'' encodes both $0$ and $1$, thus $\atom_{\ast1 \ast 1}$ is the short form for $4$ atoms that collectively form an implicant.%
	\footnote{In this case, $\atom_{\ast 101} = \emptyset$ are sufficient, since even if $\atom_{0011}$, $\atom_{0111}$, and $\atom_{1111}$ are nonempty, the appearance of conjunction with $\overline{\prim}_3$ after disjunction with $\prim_2$ in both FBEs excludes them. This shows how nontrivial it can get to reason about conditional equivalence.}
Lastly, the fourth plan in Fig. \ref{fig_decomp} (a) produces a completely different outcome for the same primitive set:
\begin{equation}
	\expr_4[\prim_1, \prim_2, \prim_3, \prim_4] = ( ( ( \prim_1 \capr \overline{\prim}_4 ) \capr \overline{\prim}_3 ) \cupr \prim_2 ). \label{eq_examp4}
\end{equation}

\begin{figure}
	\centering
	\includegraphics[width=0.48\textwidth]{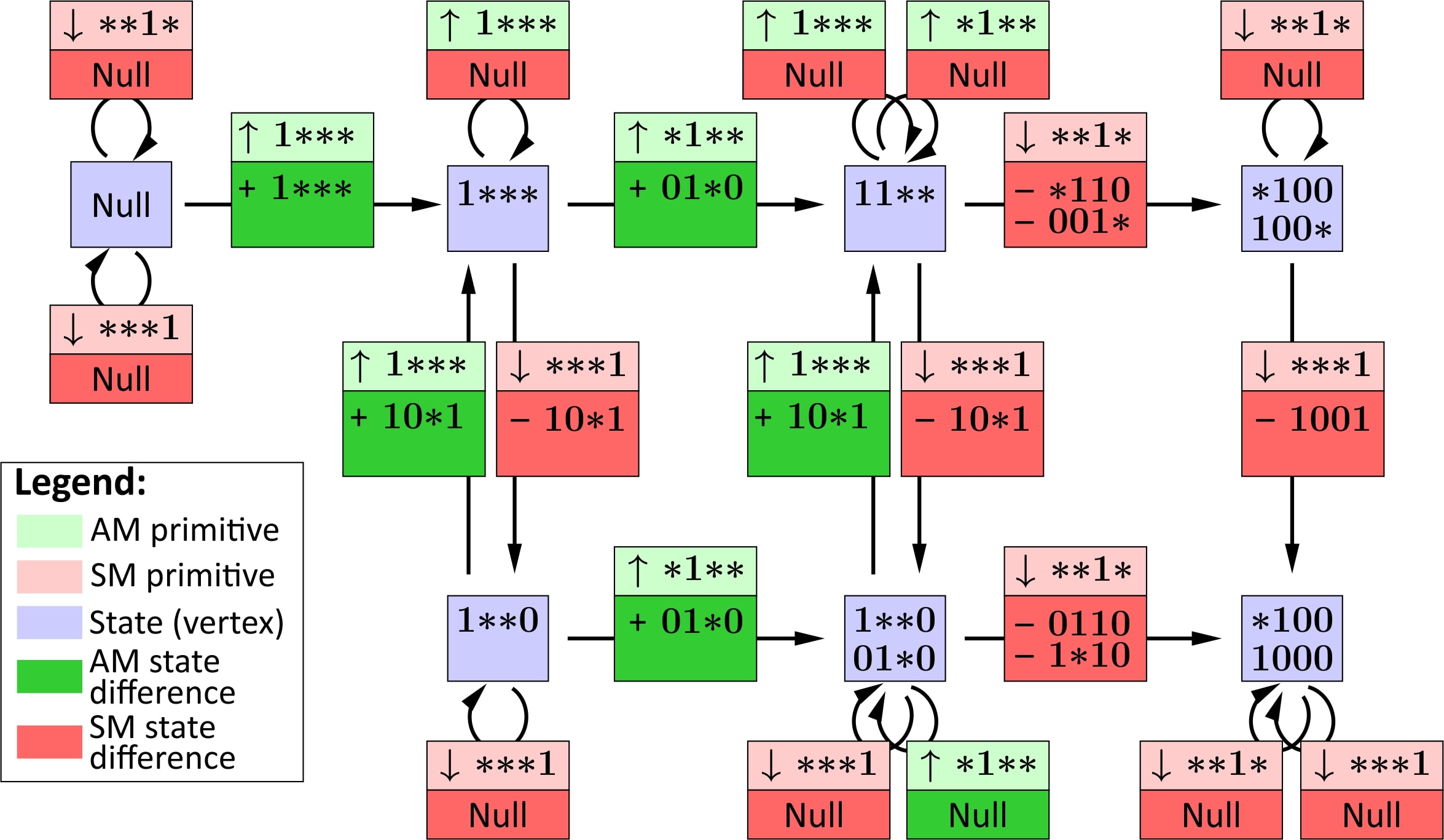}
	\caption{The planner works with bitwise atomic representation of states (on vertices) and state differences (on edges), and does not need to understand the geometric interpretation as in Fig. \ref{fig_graph}.} \label{fig_planner}
\end{figure}

Theorems \ref{theo_dnf} and \ref{theo_pro} give a {\it necessary and sufficient} condition for manufacturability. The challenge with the latter is that unlike the nonempty atoms, the empty pairwise intersections can be numerous---i.e., $(2^n - m) = O(2^n)$ if $m \ll 2^n$. Although atomic decomposition of $S_\asM$ and logical manipulation of the minimal DNF are tractable, iterating over all of the enriched DNFs in search of valid plans can lead to exponential complexity in the worst case. Often it makes sense to use the weak test (Theorem \ref{theo_dnf}) but omit the strong test (Theorem \ref{theo_pro}) to gain computational advantage prior to planning. In any case, direct process planning using a standard search algorithm (e.g., A$^\star$ best-first search \cite{Korf2010algorithms}) may be used to generate different permutations of actions and test if any of them is at least conditionally equivalent to the minimal DNF, i.e., evaluates to the same outcome with respect to the implicants corresponding to nonempty atoms only.

Figure \ref{fig_graph} and \ref{fig_planner} illustrate the state transitions of the workpiece along the 3 valid plans represented by FBEs in \eq{eq_examp1} through \eq{eq_examp3}. Note that the planning operations are represented in purely symbolic terms (Fig. \ref{fig_planner}).

\section{Design for Hybrid Manufacturing} \label{sec_design}

\begin{figure}
	\centering
	\includegraphics[width=0.45\textwidth]{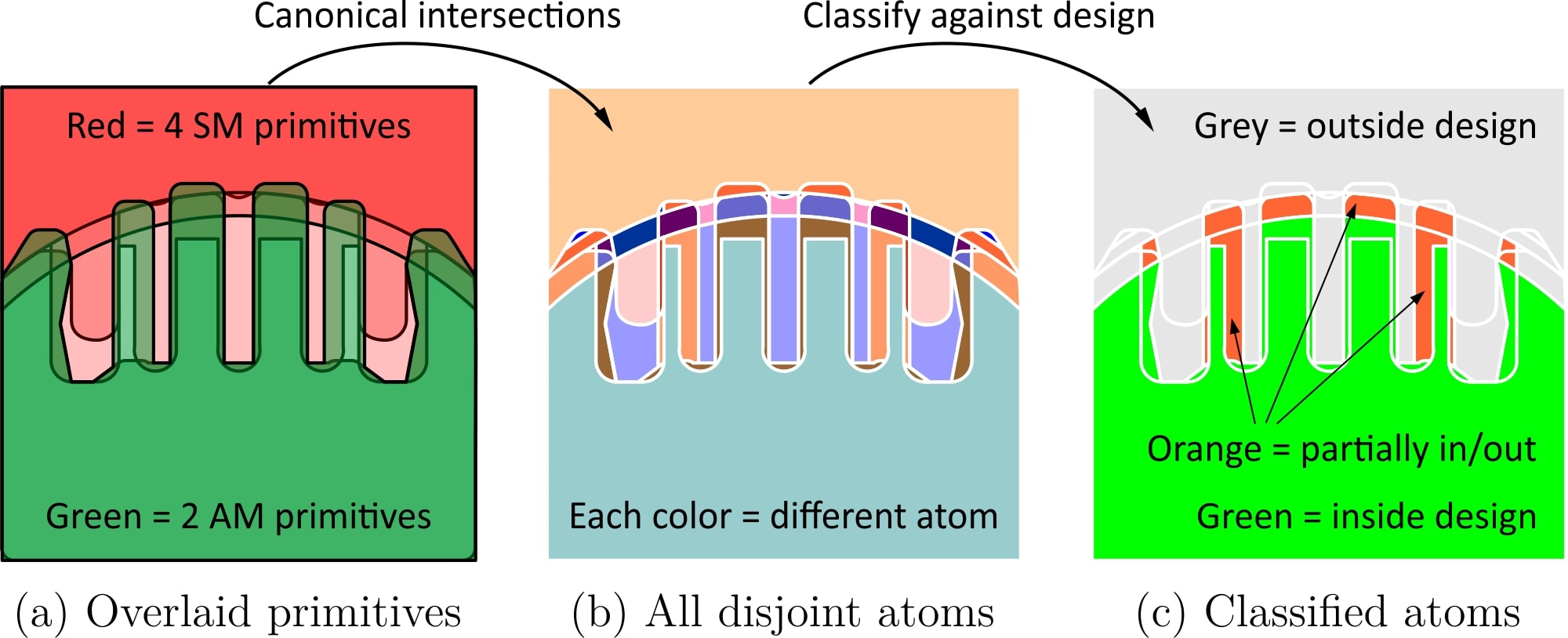}
	\caption{The 6 primitives of Fig. \ref{fig_primitives} are placed in layers on top of each other in (a) to reveal the atomic units that enumerate every sequence of their Boolean combinations in (b). Classifying atoms against the target yields an early test for manufacturability in (c).} \label{fig_atoms}
\end{figure}

The partially colliding atoms discussed in Section \ref{sec_decomp} provide targeted information to guide the choice of additional primitives that could potentially make the part manufacturable. For example, if an atom $\atom$ partially collides with a nominal shape $S_\asD$ and provides two nonempty `subatoms' $\atom_\mathrm{i} := (\atom \capr S_\asD)$ and $\atom_\mathrm{o} :=(\atom \capr \overline{S}_\asD)$, it becomes immediately obvious that a new primitive is needed to properly split this atom into its subatomic pieces. Every new primitive $\prim_{n+1}$ that is added to $\Prim$ will split all original atoms that it partially intersects into new nonempty atoms $\atom_\mathrm{i}' := (\atom \capr \prim_{n+1})$ and $\atom_\mathrm{o}' = (\atom \capr \overline{\prim}_{n+1})$. To eliminate the non-manufacturability issue due to $\atom$, one has to ``design'' the primitive such that $\atom_\mathrm{i} = \atom_\mathrm{i}'$ and $\atom_\mathrm{o} = \atom_\mathrm{o}'$.

Starting from a small initial collection of primitives ($|\Prim| = O(1)$) that produces an initial {\it coarse-grained} atomic decomposition, more primitives can be added iteratively to obtain an increasingly {\it fine-grained} decomposition until all partially colliding atoms are split into subatoms that are either fully inside or fully outside the design target. Figure \ref{fig_atoms} exemplifies how atoms are classified against the target and how they impose constraints on the design of new primitives that can split the ones contributing to non-manufacturability. More research is needed on ``iterative fine-graining'' for HM process planning.

In addition to process planning to map a given designed form to a sequence of actions, this is a first step towards systematic design for hybrid manufacturing (DfHM). By enumerating the space of all potentially hybrid manufacturable shapes for given primitives, the canonical decomposition provides a modeling paradigm and representation scheme to formulate the DfHM problem as one of designing primitives whose induced partitioning of space enumerates designs that satisfy a given function. In contrast to the common modeling tools and representation schemes in current use, the cellular enumeration in terms of HM atoms constructs a design space that, upon checking necessary and sufficient conditions of Theorems \ref{theo_dnf} and \ref{theo_pro}, {\it produces a priori guarantees of manufacturability as a powerful alternative to a posteriori verification and design feedback.}

\section{Results and Discussion} \label{sec_results}

\begin{figure*} 
	\centering
	\includegraphics[width=0.95\textwidth]{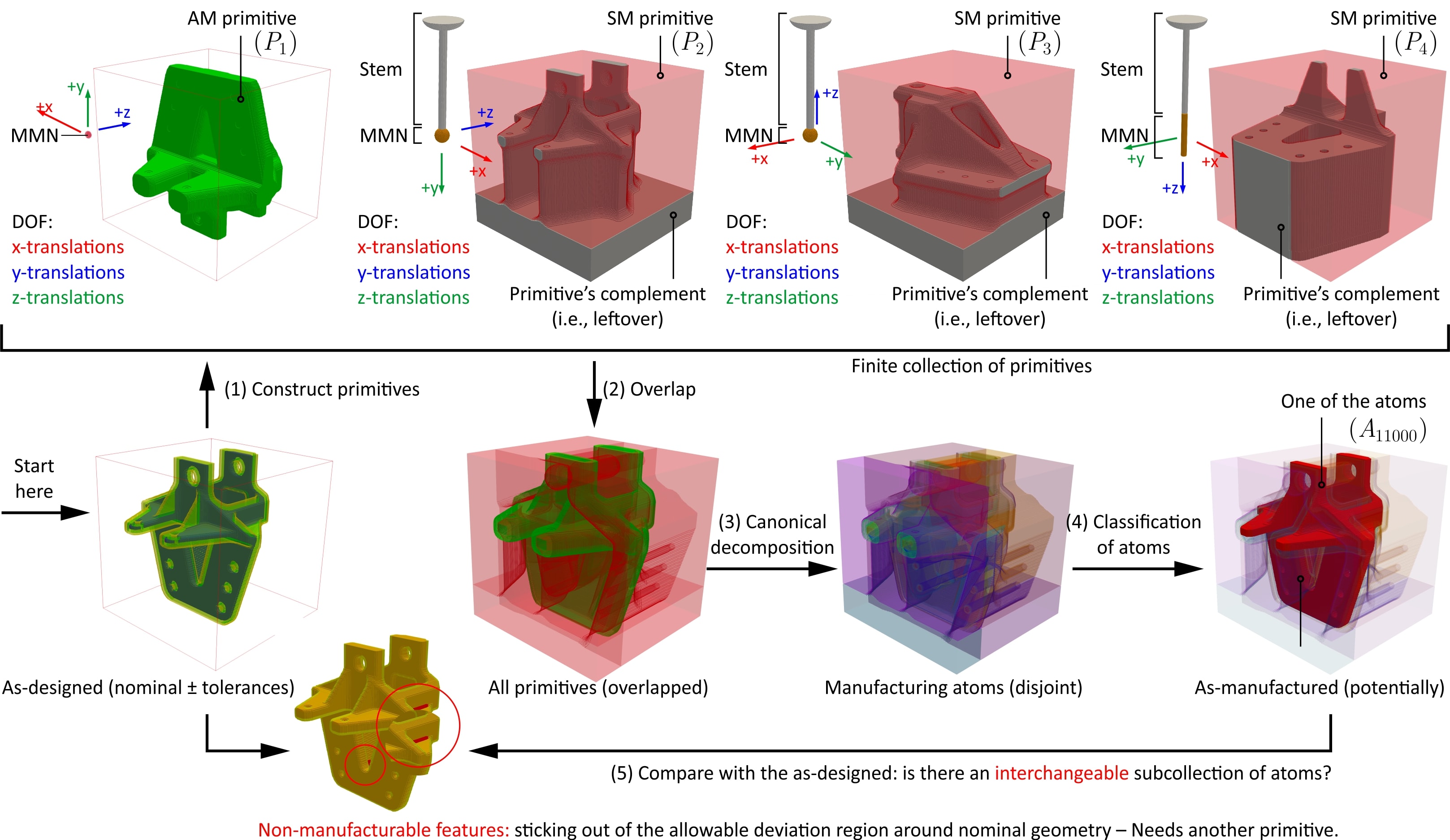}
	\caption{2 AM (including raw stock, not shown) and 3 SM primitives are constructed for a $3-$axis machine with a few HM capabilities (top). The primitives are overlapped to construct an atomic decomposition whose atoms are checked against the as-designed shape to discover an interchangeable as-manufactured shape (bottom) or deviations that required to be fixed by adding more primitives to split them.} \label{fig_prims3D}
\end{figure*}

We first revisit HM analysis of the bracket example from Section \ref{sec_intro} on a 3-axis machine with translational DOF. As illustrated in Fig. \ref{fig_smcost}, SM alone leads to substantial waste of material removed from the raw stock.
To alleviate this, we construct an O-AM primitive using the conservative formula in \eq{eq_opening_UAM_trans}. We also construct 3 O-SM primitives by applying the MRR formula in \eq{eq_opening_UAM_trans} at 3 different fixture setups, using ballmill and endmill tools. For the sake of simplicity, we are not considering issues pertaining to optimizing build orientation, scaffolding, or fixturing. Figure \ref{fig_prims3D} illustrates the workflow including the atomic decomposition of the 5 overlapped primitives.
To make up a simple test of interchangeability, let us pick a global offset tolerance of 3\% of the bounding box edge length, which is smaller than tool diameters (5$-$10\%) and (a largely exaggerated) minimum printable feature size (3$-$5\%). As a result, it is expected that some of the sharp/dull corners will not be manufacturable. This is confirmed by the classification of the atoms against the design specification. A single atom $A_{11000}$ prevails as one that slightly sticks out of the tolerance zone. The issue would disappear by either augmenting the AM/SM capabilities, using higher-resolution deposition heads or thinner/longer milling tools; or altering the design using larger nominal fillets or relaxed tolerances around surfaces of little functional significance.
To do the former automatically, the colliding atom $A_{11000}$ is marked for splitting into subatoms, and new primitives are designed using existing capabilities with proper DOF or MMN to locally target the marked atoms. Importantly, this does not require any design parameterization or assumptions on feature semantics---e.g., what a ``fillet'' means and how its radius is changed to fix the issue. Adopting a DfHM approach for the latter constrains the designer to respect the existing capabilities in the first place. By exploring the design space of parts that are enumerable by atomic decomposition of MMN-sweepable primitives with existing capabilities, only manufacturable fillets could be morphologically generated.

\begin{figure*}
	\centering \includegraphics[width=0.95\textwidth]{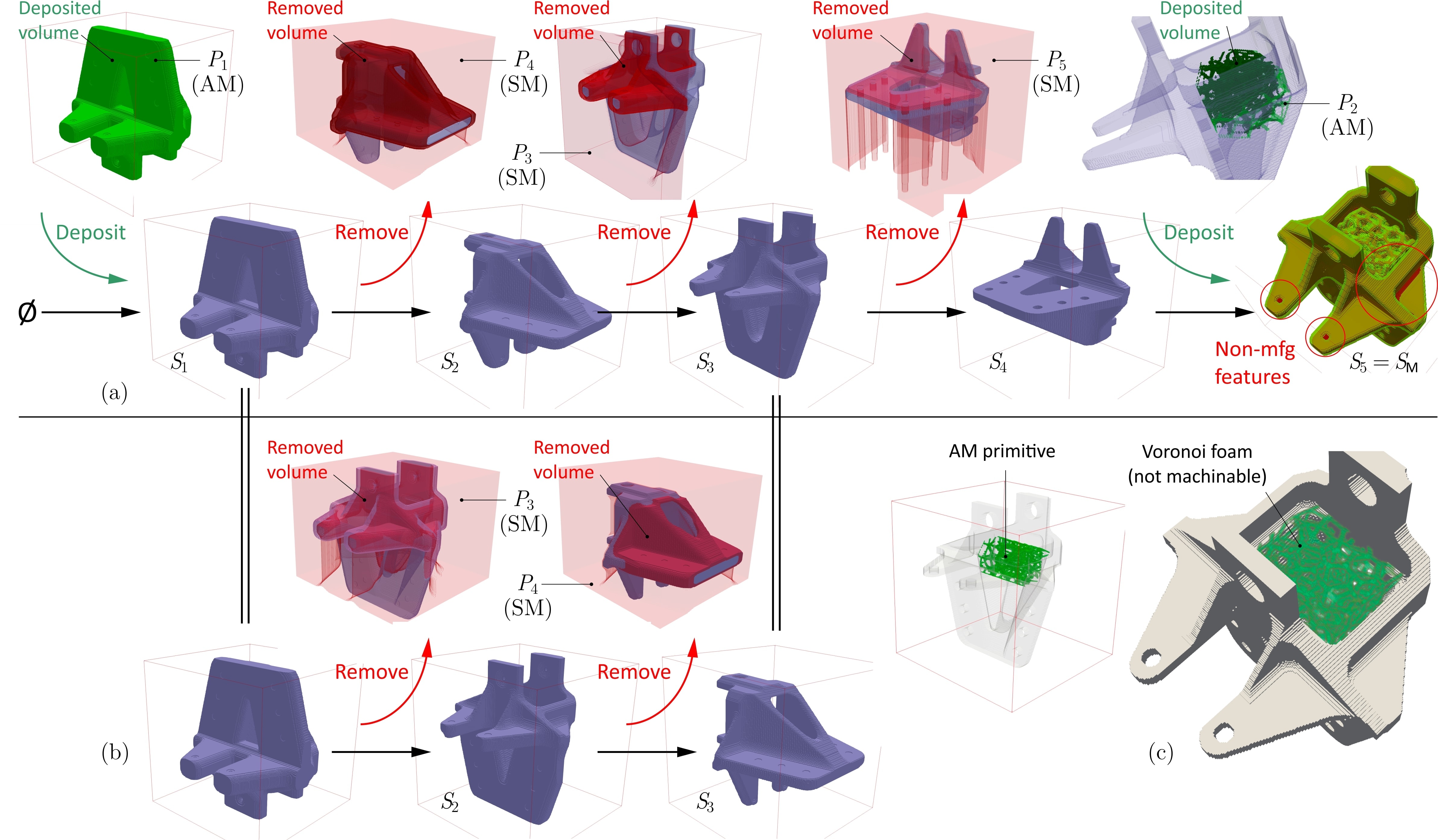}
	\caption{The cost-optimized plans (a, b) to construct an as-manufactured part with minimal deviation from the as-designed part in (c). The detour in (b) shows a permutation of a unimodal subsequence which preserves the shape, since $((S_1 \capr \overline{\prim}_4)\capr \overline{\prim}_3) = ((S_1 \capr \overline{\prim}_3)\capr \overline{\prim}_4)$, but changes the cost, since in general $\mu[S_1 \capr \overline{\prim}_4] \neq \mu[S_2 \capr \overline{\prim}_4]$ and/or $\mu[S_2 \capr \overline{\prim}_3] \neq \mu[S_2 \capr \overline{\prim}_3]$.} \label{fig_plans3D}
\end{figure*}

Figure \ref{fig_plans3D} shows another example in which we alter the design to include a Voronoi lattice structure inside its machined pocket, shown in Fig. \ref{fig_vor}. In addition to the previous 4 primitives (excluding the raw stock), we add another O-AM  primitive to make the lattice, which intersects the original shape as well as the other primitives. The  canonical decomposition need not be recomputed from scratch but can be updated by classifying the pre-existing atoms against the new primitive and splitting them accordingly. This leads to 23 nonempty atoms, in which $A_{11000}$ is classified as fully inside, 14 atoms are classified fully outside, and 8 atoms are classified as partially colliding---only one of them (namely, $A_{10000}$) violates the tolerance specification.
Notwithstanding the non-manufacturable regions, we can construct the closest as-manufactured part to the as-designed target by including the said 1+8 atoms; namely, $A_{11\ast00}$, $A_{111\ast1}$, $A_{011\ast1}$, $A_{01100}$, $A_{11110}$, and $A_{10000}$. The planner is called to map the corresponding minimal DNF or one of its enrichments with $2^5 - 23 = 9$ empty atoms to valid plans. In this case, 6 plans are found to be equivalent with the minimal DNF, while none of the enrichments created new plans.

Assuming relative cost per deposited/removed volume of 1.30, 2.15, 0.85, 0.75, and 1.50 for actions using $\prim_1$ through $\prim_5$, respectively, 6 valid plans were found. The top two plans are the following, also shown in Fig. \ref{fig_plans3D}:
\begin{align*}
	\expr_1[\Prim] &= (( ( ( \prim_1 \capr \overline{\prim}_4 ) \capr \overline{\prim}_3 ) \capr \overline{\prim}_5 ) \cupr \prim_2 ), ~ \text{cost = } 0.3271, \\
	\expr_2[\Prim] &= (( ( ( \prim_1 \capr \overline{\prim}_3 ) \capr \overline{\prim}_4 ) \capr \overline{\prim}_5 ) \cupr \prim_2 ), ~ \text{cost = } 0.3302.
\end{align*} 
The planner only needs atomic encoding to check for the correct outcome and atomic volumes to rank the candidate plans, which are rapid logical/arithmetic operations. Costly geometric operations need not be called at any point after canonical decomposition. Once candidate plans are constructed, they can be validated using physical simulation or domain-specific rules. Once validated, the ROIs can be further analyzed for tool-path planning.

\section {Conclusion}

We have demonstrated an approach to automatic evaluation of manufacturability and generation of process plans for hybrid manufacturing (HM). HM technologies are emerging as the new frontier for fabrication of parts that require complexity and freedom of AM as well as precision and quality of SM, and present a substantial opportunity for computer-aided process planning (CAPP). We have demonstrated that by re-thinking the representation of the as-designed specification in terms of manufacturing primitives, defined as a function of the machine's degrees of freedom (DOF) and minimum manufacturable neighborhoods (MMN), one can evaluate manufacturability prior to the costlier task of process planning by constructing disjoint unions of canonical intersection terms (i.e., atoms) and shortlisting them into an as-manufactured representation. We also showed that the canonical representation can be factorized by purely symbolic manipulations, using a standard AI search algorithm, to a valid as-planned representation. This provides a scalable approach to automatically construct valid, cost-optimal, and qualitatively distinct HM process plans.

This study opens up promising research directions for design for hybrid manufacturing (DfHM). Future directions to investigate include: 1) primitive design targeting localized tolerance specs; 2) advanced planning techniques for Boolean function learning (i.e., DNF-to-plan mapping); and 3) search by iterative fine-graining.

\begin{figure}
	\centering \includegraphics[width=0.45\textwidth]{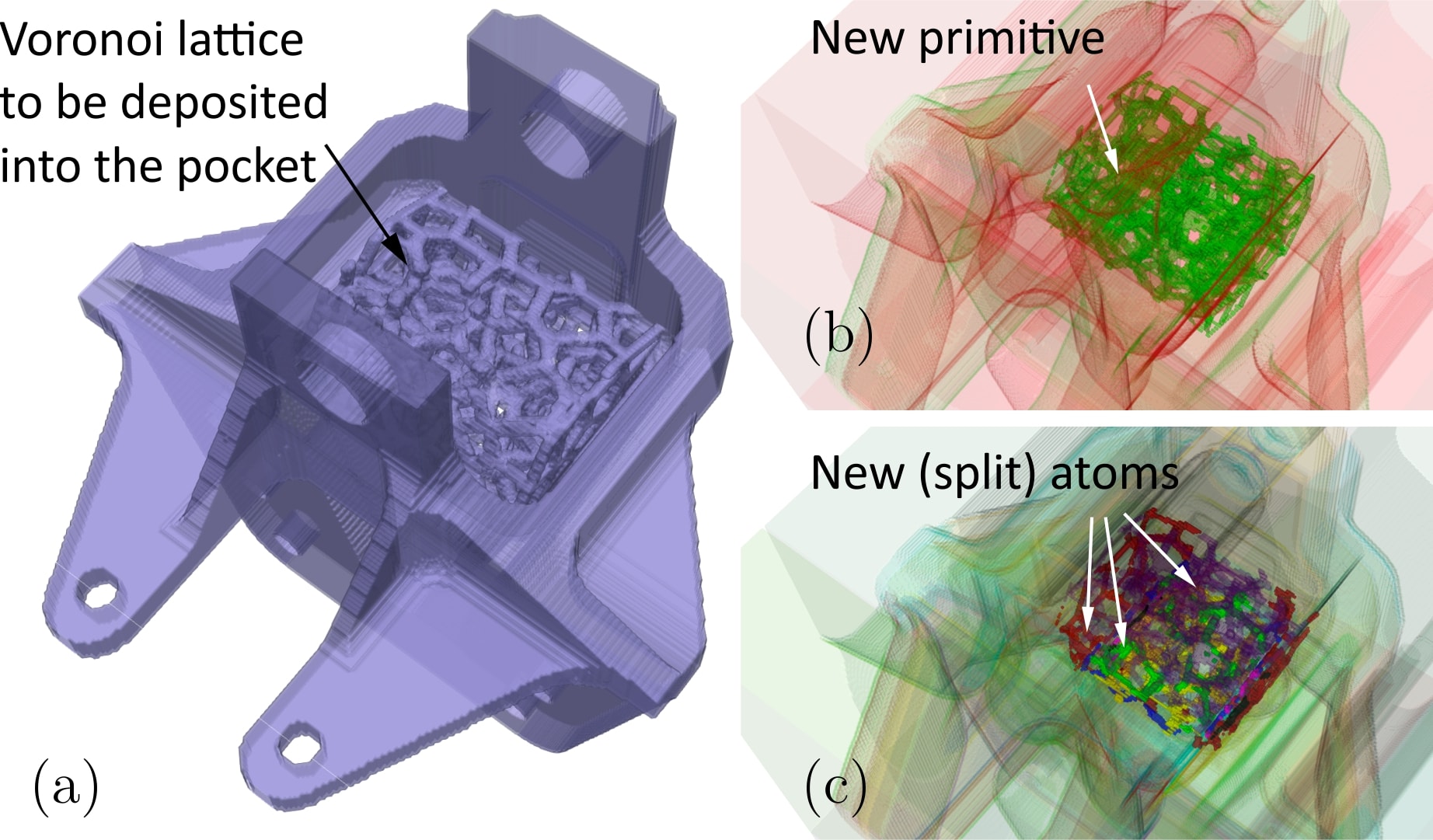}
	\caption{Intersecting an added AM primitive (Voronoi lattice) to the AM/SM primitives of Fig. \ref{fig_prims3D} repeated in (b) splits the 15 nonempty atoms into 23 smaller atoms shown in (c). The new primitive need not be disjoint from the rest of the part in (a).} \label{fig_vor}
\end{figure}

\section*{Acknowledgement}

This research was developed with funding from the Defense Advanced Research Projects Agency (DARPA). The views, opinions and/or findings expressed are those of the authors and should not be interpreted as representing the official views or policies of the Department of Defense or U.S. Government.

\newpage

\bibliography{hybridPlanning}

\end{document}